\definecolor{airforceblue}{rgb}{0.36, 0.54, 0.66}
\newcounter{note}[section]
\theoremstyle{definition}
\newtheorem{theorem}{Theorem}
\newtheorem{lemma}{Lemma}
\newtheorem{proposition}{Proposition}
\newtheorem{definition}{Definition}
\newtheorem{corollary}{Corollary}
\theoremstyle{remark}
\newtheorem{remark}{Remark}
\renewcommand{\le}{\leqslant}
\renewcommand{\leq}{\leqslant}
\renewcommand{\ge}{\geqslant}
\renewcommand{\geq}{\geqslant}
\newcommand{\alts}{\mathcal{A}}
\newcommand{\ags}{\mathcal N}
\newcommand{\X}{X}
\newcommand{\Y}{Y}
\newcommand{\W}{W}
\DeclarePairedDelimiter{\set}{\{}{\}}
\DeclareMathOperator{\E}{\mathbb{E}}
\newcommand{\bbR}{\mathbb{R}}
\newcommand{\calL}{\mathcal{L}}
\newcommand{\mcns}{\, \rhd_{\mathsf{M}}\, }
\newcommand{\ucns}{\,\rhd_{\mathsf{U}}\,}
\newcommand{\plu}{\text{plu}}
\newcommand{\mdist}{\mathsf{\dist^{ M}}}
\newcommand{\udist}{\mathsf{\dist^U}}
\newcommand{\PPV}{\ensuremath{\mathsf{PPV}}\xspace}
\newcommand{\pref}{\ensuremath{\sigma}}
\newcommand{\prefp}{\ensuremath{\vec{\pref}}}
\newcommand{\utp}{\ensuremath{\vec{u}}}
\newcommand{\scf}[2]{\ensuremath{\mathsf{SC}(#1, #2)}}
\newcommand{\swf}[2]{\ensuremath{\mathsf{SW}(#1, #2)}}
\newcommand{\swfs}[3]{\ensuremath{\mathsf{SW}_{#1}(#2, #3)}}
\newcommand{\wf}{w}
\newcommand{\oc}{\X^*}
\newcommand{\md}{\mathsf d^\mathsf{M}}
\newcommand{\ud}{\mathsf d^\mathsf{U}}
\newcommand{\p}{p}
\newcommand{\wfx}{\hat \wf}
\newcommand{\wfxp}{\hat \wf^+}
\newcommand{\dist}{\operatorname{\mathsf{dist}}}
\newcommand{\Su}{\mathsf S}
\newcommand{\su}{\mathsf s}
\newcommand{\ex}{\frac{\varepsilon}{6}}
\newcommand{\RandMech}{Truncated Harmonic\xspace}
\newcommand{\TRandMech}{Top-$t$ Truncated Harmonic\xspace}
\renewcommand{\epsilon}{\varepsilon}
\renewcommand{\hat}{\widehat}
\renewcommand{\bar}{\overline}
\renewcommand{\tilde}{\widetilde}
\DeclareMathOperator{\argmax}{\arg\max}
\DeclareMathOperator{\argmin}{\arg\min}
\begin{document}\allowdisplaybreaks
\date{}
\title{Best of Both Distortion Worlds}
 \author[1]{Vasilis Gkatzelis}
	\author[2]{Mohamad Latifian}
 \author[2]{Nisarg Shah}
	\affil[1]{Drexel University}\affil[2]{University of Toronto} 

\maketitle
\begin{abstract}
We study the problem of designing voting rules that take as input the ordinal preferences of $n$ agents over a set of $m$ alternatives and output a single alternative, aiming to optimize the overall happiness of the agents. The input to the voting rule is each agent's ranking of the alternatives from most to least preferred, yet the agents have more refined (cardinal) preferences that capture the intensity with which they prefer one alternative over another. To quantify the extent to which voting rules can optimize over the cardinal preferences given access only to the ordinal ones, prior work has used the \emph{distortion} measure, i.e., the worst-case approximation ratio between a voting rule's performance and the best performance achievable given the cardinal preferences. 

The work on the distortion of voting rules has been largely divided into two ``worlds'': \emph{utilitarian distortion} and \emph{metric distortion}. In the former, the cardinal preferences of the agents correspond to general utilities and the goal is to maximize a normalized social welfare. In the latter, the agents' cardinal preferences correspond to costs given by distances in an underlying metric space and the goal is to minimize the (unnormalized) social cost. Several deterministic and randomized voting rules have been proposed and evaluated for each of these worlds separately, gradually improving the achievable distortion bounds, but none of the known voting rules perform well in both worlds simultaneously.

In this work, we prove that one can in fact achieve the ``best of both worlds'' by designing new voting rules, both deterministic and randomized, that simultaneously achieve near-optimal distortion guarantees in both distortion worlds. We also prove that this positive result does not generalize to the case where the voting rule is provided with the rankings of only the top-$t$ alternatives of each agent, for $t<m$, and study the extent to which such best-of-both-worlds guarantees can be achieved.
\end{abstract}
\section{Introduction}
A lot of recent work on computational social choice has focused on evaluating the \emph{distortion} of voting rules. Informally, this captures the extent to which they can maximize the social welfare (or minimize the social cost) when making a decision using only limited information regarding the preferences of the agents. In the most fundamental setting, the voting rule is given a set $\alts$ of  $m$ alternatives and a set $\ags$ of $n$ agents, each with their own preferences over the alternatives, and it needs to select an alternative. Even though the preferences of the agents can be complicated, the vast majority of the voting rules used in practice ask each agent to report just their \emph{ranking} over the available alternatives (i.e., their \emph{ordinal} preferences), or even just a prefix that ranks their top-$t$ alternatives for some $t<m$ (e.g., in many elections, the agents are asked to select their top choice, second choice, and third choice only). Although this provides very useful guidance for the voting rule, it does not capture the intensity with which an agent may prefer one alternative over another: two agents with the same ranking can be quite different in terms of how much they prefer their top choice over their second one. This limits the ability of the voting rule to optimize cardinal objectives such as the social welfare or the social cost. The distortion measures the ratio between the objective value achieved by the rule on an instance and the best possible objective value in that instance, in the worst case over all instances~\cite{PR06}. Without appropriate modeling assumptions, it is impossible to achieve any bounded distortion using only the ordinal preferences. The distortion literature can be largely divided into two worlds that make different modeling assumptions: \emph{utilitarian} and \emph{metric}.

\vspace{5pt}
$\bullet$ The \emph{utilitarian distortion} approach makes no assumptions regarding the relative intensity of the agents' preferences, captured by general von Neumann-Morgenstern utilities. The agents' happiness is measured by their expected utility, and the goal of the voting rule is to maximize the \emph{normalized} social welfare (i.e., to maximize the sum of the agents' expected utilities after normalizing each agent's utility values over different alternatives to add up to $1$). This normalization serves two primary purposes: i) it introduces a sense of fairness, by normalizing all the agents' utilities down to the same scale, while maintaining the relative intensity of their preferences, and ii) it enables the voting rule to achieve bounded distortion guarantees that would otherwise be impossible. We refer the reader to the work of \citet{aziz2020justifications} for several additional justifications for this objective function. 

\vspace{5pt}
$\bullet$ The \emph{metric distortion} approach focuses on instances where the intensity of the agents' preferences can be captured by a distance function in a metric space, i.e., they obey the triangle inequality. A common motivating example is the classic facility location problem~\cite{anshelevich2021ordinal,filos2021approximate}: the agents lie in some metric space and a facility needs to be opened at a location within that same metric space. Each agent prefers locations that are closer to them and they rank the potential locations accordingly. Another motivating application comes from ranking political candidates: each voter’s preferences over different candidates are determined by their distance on different issues (which, e.g., can be captured as different dimensions). The metric space imposes enough structure on the underlying cardinal preferences of the agents to enable much stronger distortion guarantees whenever it holds, even without any normalization. 
\vspace{5pt}

Recent work in the distortion literature has analyzed many classic voting rules and introduced several new ones to achieve (near-)optimal distortion bounds for both deterministic and randomized voting rules, focusing either on the utilitarian or on the metric setting. However, none of these voting rules perform well on both settings simultaneously. 
Therefore, if some designer chooses a voting rule optimized for metric distortion, they should be confident that the metric assumption holds in the setting at hand, otherwise they may get very high distortion. If, on the other hand, they choose a voting rule optimized for utilitarian distortion and the metric assumption happens to hold in their application domain, they may be getting optimal utilitarian distortion guarantees but missing out on much stronger metric distortion guarantees. In fact, it can be non-trivial to detect in advance whether the metric structure holds or not, even for the two main motivating applications discussed above. For example, in facility location the triangle inequality may not apply if the agents care about other features of the potential locations beyond the distance (e.g., proximity to some other facility that the agent often visits, or whether it is on their way to work), or if the agent's cost grows non-linearly in the distance to the location. Similarly, when voting over political candidates, analogous issues may arise if other factors can significantly affect the voters' preferences, e.g., who their parents or friends vote for, or whether they trust some candidate more than another.
 
In this paper, our goal is to investigate whether there exist (deterministic and randomized) voting rules that simultaneously achieve near-optimal distortion guarantees in both utilitarian and metric worlds, thus providing an obvious choice for someone looking to minimize distortion. Such rules would be perfect for deployment within an automated decision-making tool, providing uninitiated users with the opportunity to input their rankings over some alternatives and find a low-distortion alternative without having to seek expert input regarding which rule to use. 

\newcommand{\STAB}[1]{\begin{tabular}{@{}c@{}}#1\end{tabular}}

\begin{table}[ht]
 \setlength\extrarowheight{7pt}
\small
\centering
\begin{tabular}{|c|c||c|c|} 
\hline
&\textbf{Rules}      & \textbf{Metric Distortion} & \textbf{Utilitarian Distortion}  \\ [1ex]
\hline \hline
\multirow{4}{*}{\STAB{\rotatebox[origin=c]{90}{\textbf{Deterministic}}}}
&Plurality   & $\Theta(m)$                   & $\mathbf{\Theta(m^2)}$                    \\
                         
&Copeland                & $\mathbf{\Theta(1)}$                         & Unbounded                        \\
&Plurality Matching, Plurality Veto          & $\mathbf{\Theta(1)}$      & $\Omega(n/m)$         \\[0.2cm]\cline{2-4}
&Pruned Plurality Veto\textsuperscript{*} & $\mathbf{\Theta(1)}$ & $\mathbf{\Theta(m^2)}$  \\[1ex] \hline \hline
\multirow{4}{*}{\STAB{\rotatebox[origin=c]{90}{\textbf{Randomized}}}}
&Random Dictatorship            & $\mathbf{\Theta(1)}$  & $\Theta(m\sqrt m)$     \\
&Harmonic Rule                  & Unbounded             & $\mathbf{\tilde \Theta(\sqrt{m})}$ \\
&Stable Lottery Rule            & Unbounded             & $\mathbf{\Theta(\sqrt{m})}$ \\[0.2cm]\cline{2-4}
&\RandMech Rule\textsuperscript{*}                 & $\mathbf{\Theta(1)}$  & $\mathbf{\tilde \Theta(\sqrt{m})}$\\[1ex]
\hline
\end{tabular}
\caption{Our novel voting rules, (deterministic) Pruned Plurality Veto and (randomized) Truncated Harmonic Rule, achieve asymptotically optimal metric and utilitarian distortions simultaneously (up to a log factor in one case). Existing rules that are asymptotically optimal in one of the two worlds are provided for comparison.}
\label{tbl:well-known}
\end{table}

\subsection{Our Results}
We propose deterministic and randomized voting rules that take agents' full rankings or top-$t$ preferences as input, return a single alternative, and simultaneously provide near-optimal metric and utilitarian distortion guarantees. Our proposed rules add a safety net by ensuring that even when the metric assumption does not hold, approximately optimal welfare will be achieved with respect to all possible utility functions consistent with the voters' ordinal preferences.

\paragraph{Full rankings.} For deterministic rules, the optimal guarantees we seek are $\Theta(1)$ metric distortion~\cite{GHS20} and $\Theta(m^2)$ utilitarian distortion~\cite{CP11,CNPS17}. For randomized rules, the optimal guarantees we seek are $\Theta(1)$ metric distortion~\cite{AP17} and $\tilde{\Theta}(\sqrt{m})$ utilitarian distortion~\cite{BCHL+15,EKPS22}. 

We first observe that all the existing voting rules that achieve the desired guarantee for either the utilitarian or the metric world fail to achieve the desired guarantee for the other world. 

Our main contribution is to deliver positive news: it is indeed possible to achieve the ``best of both worlds'' using both deterministic and randomized rules! Specifically, we design two novel voting rules: the deterministic \emph{Pruned Plurality Veto} rule achieves the desired $\Theta(1)$ metric distortion and $\Theta(m^2)$ utilitarian distortion simultaneously, and the randomized \emph{Truncated Harmonic} rule achieves $\Theta(1)$ metric distortion and $\tilde{\Theta}(\sqrt{m})$ utilitarian distortion simultaneously. Curiously, while Pruned Plurality Veto builds on Plurality Veto, which is optimal for metric distortion, Truncated Harmonic builds on the Harmonic Rule, which is optimal for utilitarian distortion. While both rules employ intuitive modifications, the proofs of their distortion guarantees require several new technical lemmas, which may be of independent interest.

\Cref{tbl:well-known} shows the comparison between our voting rules that achieve asymptotically optimal distortion under both metric and utilitarian worlds simultaneously, and the known voting rules that achieve asymptotically optimal distortion under only one of the two worlds.  

\paragraph{Top-$t$ preferences.} In stark contrast to the good news with full rankings, we deliver bad news when the input is top-$t$ preferences with small $t$.

For deterministic rules, the optimal utilitarian distortion for any $t$ is $\Theta(m^2)$~\cite{CP11,CNPS17}. We prove that any deterministic rule with bounded utilitarian distortion must incur $\Theta(m-t+1)$ metric distortion, as opposed to the optimal $\Theta(m/t)$ metric distortion~\cite{kempe2020communication}, which is sub-optimal when $t$ is neither $\Theta(1)$ nor $m-\Theta(1)$. 

For randomized rules, the optimal metric distortion is $\Theta(1)$ for any $t\geq 1$. We prove that $\Theta(1)$ metric distortion implies $\Omega(\max((m/t)^{1.5},\sqrt{m}))$ utilitarian distortion, which is strictly worse than the optimal $\Theta(\max(m/t,\sqrt{m}))$ utilitarian distortion~\cite{BHLS22} for all $t \in o(m^{2/3})$. 

Our results chart a more general trade-off between metric and utilitarian distortions, and we complement our negative results with novel constructive upper bounds.

\subsection{Related Work}
There is a huge body of work on distortion, from both metric and utilitarian points of view; see the recent survey of \citet{AFSV21} for a detailed summary. \citet{PR06} introduced the notion of distortion in the utilitarian setting.
\citet{CP11} proved that the Plurality rule has $\Theta(m^2)$ utilitarian distortion, which later proved to be optimal over all deterministic rules~\cite{CNPS17}. \citet{BCHL+15} studied the utilitarian distortion of randomized voting rules and proposed the Harmonic Rule, which we use in our work. They proved that its distortion is $O(\sqrt {mH_m})$ and designed a more complex rule with distortion $O(\sqrt {m} \log^*(m))$, also proving a lower bound of $\Omega(\sqrt{m})$. This gap was closed by \citet{EKPS22}, who achieved the optimal $\Theta(\sqrt{m})$ distortion by introducing the Stable Lottery Rule.

On the other end, \citet{ABEP+18} initiated the study of distortion in the metric setting. They proved that the Copeland rule 
has a metric distortion of $5$. They proved a lower bound of $3$ on the metric distortion of any deterministic voting rule and conjectured that this lower bound is tight. \citet{MW19} improved the upper bound to $4.236$, and \citet{GHS20} finally resolved the conjecture by achieving a metric distortion of $3$ by introducing the Plurality Matching rule. 
While the Plurality Matching rule is rather complicated, \citet{FD22} subsequently introduced a simpler refinement, Plurality Veto, which still achieves a metric distortion of $3$. Very recent work has provided an even deeper understanding of the class of deterministic rules achieving a metric distortion of 3~\cite{PracticalOptimalDistortion,peters2023note}.

Even though a metric distortion better than $3$ is not achievable by deterministic rules, \citet{AP17} proved that randomization can break this barrier. They proved that Random Dictatorship has metric distortion slightly better than $3$ and provide a lower bound of $2$ on the metric distortion of any randomized rule. Several works have since tried to close this gap, with \citet{charikar2022metric} recently improving the lower bound to $2.0261$. Nonetheless, the optimal metric distortion of randomized rules is still $\Theta(1)$. 

While the aforementioned work considers full rankings as input, there is significant interest in analyzing the distortion achievable using other types of information. \citet{ABFV21} considered eliciting restricted cardinal information regarding the agents' preferences in addition to their ordinal preferences. 
In contrast, some work considers eliciting \emph{less} information than ordinal preferences, focusing on top-$t$ preferences. In the metric setting, \citet{kempe2020communication} proved a lower bound of $\frac{2m-t}{t}$ and an upper bound of $\frac{79m}{t}$ on the best metric distortion that any deterministic voting rule can achieve using top-$t$ preferences; the upper bound was later improved to $6m/t+1$ by \citet{AFP22}. While the exact bound is yet to be identified, this pins down the asymptotic bound as $\Theta(m/t)$. For randomized rules, the best bound is still $\Theta(1)$ for any $t$, achieved by Random Dictatorship using only top-$1$ preferences (i.e., plurality votes). 
In the utilitarian setting, \citet{BHLS22} proved that the optimal utilitarian distortion achievable by randomized rules using top-$t$ preferences is $\Theta(\max(m/t,\sqrt{m}))$. For deterministic rules, it is $\Theta(m^2)$ for any $t$ because Plurality achieves it using top-$1$ preferences and is optimal even within rules using full rankings.

The distortion framework has also been extended to study many other problems such as the elicitation-distortion tradeoff~\cite{mandal2019thrifty,mandal2020optimal,kempe2020communication}, more complex elections~\cite{benade2021preference,borodin2019primaries,caragiannis2022metric,ebadian2023approval}, and even problems beyond voting~\cite{anshelevich2016blind,amanatidis2022few,halpern2021fair,ebadian2022efficient}.

\section{Preliminaries}
Let $\ags$ be a set of $n$ agents and $\alts$ be a set of $m$ alternatives. 
Throughout the paper, agents are labeled $i,j$ and alternatives are labeled $\X, \Y, \W$. We use $\calL(\alts)$ to denote the set of all rankings (linear orders) over the alternatives, and $\Delta(\alts)$ to denote the set of all distributions over $\alts$.

\paragraph{Preference profile.} Each agent $i \in \ags$ has a preference ranking $\pref_i \in \calL(\alts)$ over the alternatives. Collectively, they form a preference profile $\prefp = (\pref_1, \pref_2, \ldots, \pref_n)$. We use $r_i(\X)$ to denote the rank of alternative $\X$ in agent $i$'s ranking, and say agent $i$ prefers alternative $\X$ to alternative $\Y$ if $r_i(\X)<r_i(\Y)$, which we also denote as $\X \succ_i \Y$.
The plurality score $\plu(\X,\prefp)$ of alternative $\X$ under $\prefp$ is the number of agents who rank $\X$ first. We use $\ags^\X$ to denote the set of such agents. 

\paragraph{Voting Rule.} A (randomized) voting rule $f: \calL(\alts)^n \to \Delta(\alts)$ is a function that takes as input a preference profile $\prefp$ and outputs a probability distribution $f(\prefp)$ over alternatives. We call $f$ deterministic if it always outputs a single alternative with probability $1$; in this case, we slightly abuse the notation to let $f(\prefp)$ denote this alternative. The Plurality rule selects an alternative with the highest plurality score. 

We assume that the preference profile stems from underlying cardinal preferences of the agents over the alternatives. We consider two frameworks for modeling these cardinal preferences. In this paper, our goal is to design voting rules that perform well under both frameworks simultaneously and prove corresponding lower bounds on such ``best of both worlds'' guarantees.

\subsection{Metric Framework}
In the metric framework, we assume that all the agents and alternatives are embedded in a (pseudo) metric space identified by a distance function $d: (\ags \cup \alts)^2 \to \bbR_{\ge 0}$ satisfying the following.
\begin{itemize}
    \item $d(a,a) = 0$ for all $a \in \ags \cup \alts$.
    \item Symmetry: $d(a,b) = d(b,a)$, for all $a,b \in \ags \cup \alts$.
    \item Triangle inequality: $d(a,b) \leq d(a,c)+ d(c,b)$ for all $a,b,c \in \ags \cup \alts$.
\end{itemize}

We say that preference profile $\prefp$ is consistent with metric space $d$, denoted as $d \mcns \prefp$, if, for every $i\in \ags$ and $\X,\Y\in \alts$, we have $\X \succ_i \Y \Rightarrow d(i, \X) \leq d(i, \Y)$. In this framework, we capture the quality of each alternative by her total cost to the agents: define $\scf{\X}{d} = \sum_{i \in \ags} d(i, \X)$ be the social cost of alternative $\X$ in metric space $d$.

\paragraph{Metric Distortion.}
In this setting, the rule that selects an alternative with a lower social cost is generally considered more efficient. Formally, we define the metric distortion of a distribution over alternatives $p \in \Delta(\alts)$ in metric space $d$ as the ratio between the expected social cost of an alternative sampled from $p$ to the minimum possible social cost, i.e., 
$$\mdist(p, d) = \frac{\E_{\X \sim p} [\scf{\X}{d}]}{\min_{\X \in \alts} \scf{\X}{d}}.$$
The metric distortion of a voting rule $f$ on preference profile $\prefp$ is the worst-case metric distortion of $f(\prefp)$ on any metric $d \mcns \prefp$: 
$
\mdist(f,\prefp) = \sup_{d\, :\, d \mcns \prefp} \mdist(f(\prefp), d).
$
Finally, the metric distortion of $f$ is obtained by taking the worst case over all preference profiles: $\mdist(f) = \max_{\prefp \in \calL(\alts)^n} \mdist(f,\prefp)$.

\subsection{Utilitarian Framework}
The utilitarian framework assumes that each agent $i \in \ags$ has a utility function $u_i: \alts \to \bbR_{\ge 0}$ over the alternatives. Collectively, they form the utility profile $\utp = (u_1, u_2, \ldots, u_n)$. Following the literature~\cite{aziz2020justifications}, we assume that the utility functions are unit-sum, which means that for each agent $i$, we have $\sum_{\X \in \alts} u_i(\X) = 1$. 
We say that preference profile $\prefp$ is consistent with utility profile $\utp$, denoted $\utp \ucns \prefp$, if for each $i \in \ags$ and $\X,\Y \in \alts$, we have $\X \succ_i \Y \Rightarrow u_i(\X) \geq u_i(\Y)$. 

Here, the quality of each alternative is determined based on the sum of the utilities of the agents for it: define the social welfare of alternative $\X$ in utility profile $\utp$ as $\swf{\X}{\utp} = \sum_{i \in \ags} u_i(\X)$. We remark that studying the social welfare with unit-sum utilities is equivalent to studying the \emph{normalized} social welfare with unrestricted utilities, which is the formulation described in the introduction. We use the former to be consistent with the literature. Similarly to the metric case, we can define distortion in the utilitarian framework.

\paragraph{Utilitarian Distortion.}
The utilitarian distortion of distribution over alternatives $p \in \Delta(\alts)$ in utility profile $\utp$  is defined as:
$$\udist(p, \utp) = \frac{\max_{\X \in \alts} \swf{\X}{\utp}}{\E_{\X \sim p}[\swf{\X}{\utp}]}.$$
The utilitarian distortion of voting rule $f$ on preference profile $\prefp$ is the worst-case distortion of $f(\prefp)$ on any utility profile $\utp \ucns \prefp$:
$
\udist(f,\prefp) = \sup_{\utp\, :\, \utp \ucns \prefp} \udist(f(\prefp), \utp).
$ 
Finally, the utilitarian distortion of $f$ is obtained by taking the worst case over all preference profiles: $\udist(f) = \max_{\prefp \in \calL(\alts)^n} \udist(f,\prefp)$.

\section{A Deterministic Voting Rule}

Our goal in this section is to check if there exists a deterministic voting rule that achieves near-optimal distortion guarantees in both metric and utilitarian frameworks simultaneously. Let us first review the state-of-the-art guarantees in each framework. For deterministic rules, it is known that the best possible metric distortion is $3$~\cite{ABEP+18,GHS20}, which can be achieved using Plurality Matching~\cite{GHS20} or its refinement Plurality Veto~\cite{FD22}, and the best possible utilitarian distortion is $O(m^2)$, which can be achieved by Plurality~\cite{CP11,CNPS17}.

While the aforementioned rules achieve optimal distortion in one framework, we remark that they incur a terrible distortion in the other framework, thus failing to provide our desired best-of-both-worlds guarantee. Plurality Matching (and its refinement Plurality Veto) incurs a utilitarian distortion of $\Omega(n/m)$ (as opposed to the desired $O(m^2)$) and Plurality incurs a metric distortion of $\Omega(m)$ (as opposed to the desired $O(1)$). 

Because Plurality Matching and Plurality Veto are not central to our work, we refer an interested reader to the works of \citet{GHS20} and \citet{FD22} for their definitions.

\begin{proposition}
    For $m \ge 3$, the utilitarian distortion of Plurality Matching (and consequently, its refinement Plurality Veto) is $\Omega(n/m)$.
\end{proposition}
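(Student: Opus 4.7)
The plan is to exhibit a preference profile on $n$ voters and $m \ge 3$ alternatives in which Plurality Veto selects an alternative $W$ whose social welfare is $\Theta(1/m)$ under a consistent utility profile, while another alternative achieves social welfare $\Theta(n/m)$, giving utilitarian distortion $\Omega(n/m)$ (in fact $\Omega(n)$). Since every Plurality Veto outcome is a Plurality Matching winner under some tie-breaking, the bound transfers to Plurality Matching.

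Take alternatives $W, X_1, \ldots, X_{m-1}$, let voter~$1$ rank $W \succ X_1 \succ \cdots \succ X_{m-1}$, and partition the remaining $n-1$ voters into $m-1$ groups of size $q \approx (n-1)/(m-1)$ with group~$j$ ranking $X_j \succ W \succ X_{j+1} \succ X_{j+2} \succ \cdots \succ X_{j-1}$ (indices modulo $m-1$), so that group~$j$'s bottom-ranked candidate is $X_{j-1}$. This assigns plurality $1$ to $W$ and plurality $q$ to each $X_j$. The central technical step will be to trace Plurality Veto under the voter ordering (voter~$1$, group~$1$, group~$2$, \ldots, group~$m-1$) and verify that it eliminates $X_{m-1}$ (via voter~$1$ together with group~$1$), then $X_1, X_2, \ldots, X_{m-3}$ (via groups $2, 3, \ldots, m-2$), and finally $X_{m-2}$ (during a truncated turn of group~$m-1$), all without ever vetoing $W$. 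The cyclic ranking structure keeps at least one non-eliminated $X_k$ below $W$ in each group's ranking at every step, so $W$ is never the bottom non-eliminated alternative. The main obstacle will be the bookkeeping around a single ``spare'' veto created when voter~$1$ plus group~$1$ cast one more veto than needed to eliminate $X_{m-1}$; for $m \ge 4$ this spare falls harmlessly on $X_{m-2}$ and is absorbed when group~$m-1$'s turn terminates early, while at $m = 3$ there is no buffer candidate below $W$ and I will adjust group sizes (one group of size $(n-1)/2 - 1$ and the other of size $(n-1)/2$) so that voter~$1$ plus group~$1$ precisely eliminates one of the two candidates with no spare.

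To finish, I will specify the consistent utility profile: voter~$1$'s utilities are uniform ($u_1(\cdot) = 1/m$, the smallest value for $W$ compatible with $W$ on top under unit-sum normalization), and each group~$j$'s voters place their entire utility budget on $X_j$ (consistent with the ranking because $W$ is second with zero utility and the remaining $X_k$'s also receive zero below $W$). This yields $\swf{W}{\utp} = 1/m$ while $\swf{X_j}{\utp} = 1/m + q = \Theta(n/m)$, establishing the claimed distortion. Finally, to certify that $W$ is a Plurality Matching winner I will verify Hall's condition on the matching graph $G_W$: for every $Y \neq W$, only the voters in the unique group whose top is $Y$ fail to rank $W$ weakly above $Y$, so the neighborhood of the top-set of $Y$ has size at least $n - q \ge q = \plu(Y)$ for $m \ge 3$, and a routine case analysis handles sets of voters with multiple distinct top candidates.
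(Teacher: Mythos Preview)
Your construction shows that $W$ is selected by Plurality Veto under one particular voter ordering, and hence that $W$ is \emph{a} possible Plurality Matching output. But on your instance $W$ is not the \emph{only} possible output. For example, $X_1$ also satisfies the Plurality Matching condition: match all of group~$2$ to the $q$ copies of $X_1$, groups $3,\ldots,m-1$ to $X_2,\ldots,X_{m-2}$ respectively, voter~$1$ to a copy of $X_{m-1}$, and group~$1$ to $W$ together with the remaining copies of $X_{m-1}$. Concretely, run Plurality Veto with the ordering: groups $3,\ldots,m-1$, then voter~$1$, then $q-1$ voters of group~$1$, then the last voter of group~$1$, then group~$2$. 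The first phase eliminates $X_2,\ldots,X_{m-2}$; the next $q$ vetoes eliminate $X_{m-1}$; the lone remaining group-$1$ voter now sees only $\{W,X_1\}$ and, with $W$ at the bottom, vetoes $W$; group~$2$ then finishes off $X_1$, which is the last survivor. Since $X_1$ has high welfare under your utility profile, your lower bound applies only under adversarial tie-breaking, not to every instantiation of the rule. (Your Hall-condition paragraph certifies that $W$ is \emph{a} winner, which already follows from the Plurality Veto trace; it does not establish uniqueness.)

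The paper sidesteps this by using a structural characterization rather than tracing the algorithm: any Plurality Matching winner must be ranked first at least as many times as it is ranked last. The paper's profile is engineered so that every $X_k$ is ranked last exactly once more than it is ranked first, forcing $X^*$ to be the \emph{unique} output under any tie-breaking, with no algorithm trace and no $m=3$ special case. Your construction is one tweak away: add, for each $j$, a single voter who ranks $W$ first and $X_j$ last; then every $X_j$ has last-count strictly exceeding first-count and $W$ is forced. The price is that $W$ now collects $\Theta(1)$ welfare from those $m-1$ extra uniform-utility voters, which is exactly why the paper obtains $\Omega(n/m)$ rather than the $\Omega(n)$ you were aiming for.
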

\begin{proof}
    Let $\alts = \set{X^*,X_1,X_2,\ldots,X_{m-1}}$. Consider a preference profile $\prefp$ in which, for $k \in [m-1]$, $n/(m-1)-1$ agents rank $X_k$ first, $X^*$ second, and $X_{k+1}$ last (with $X_m$ redefined as $X_1$ for cyclicity) and one agent ranks $X^*$ first and $X_k$ last. The rest of the preference profile can be completed arbitrarily. 
    
    \citet{GHS20} prove that an alternative can only be returned by Plurality Matching if it is ranked first at least as many times as it is ranked last. Note that every alternative except $X^*$ appears last once more than it appears first. Hence, Plurality Matching and its refinement Plurality Veto must output $X^*$. 

    Consider a utility profile $\utp \ucns \prefp$ in which every agent who ranks $X^*$ first has utility $1/m$ for all alternatives, and every other agent has utility $1$ for her top choice. Then, $\swf{X^*}{\utp} = O(1)$ whereas $\swf{X_1}{\utp} = \Omega(n/m)$, yielding the desired utilitarian distortion lower bound.
\end{proof}

\begin{proposition}[\cite{ABEP+18}]
    The metric distortion of the Plurality rule is $2m-1$.
\end{proposition}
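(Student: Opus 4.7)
I would prove the two matching bounds separately, with the upper bound being a clean averaging argument and the lower bound requiring a carefully tuned instance on the real line.

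\paragraph{Upper bound $\mdist(\text{Plurality}) \leq 2m-1$.} Let $\X$ be the Plurality winner on profile $\prefp$, let $d \mcns \prefp$, and let $\X^* \in \argmin_{\Y \in \alts} \scf{\Y}{d}$. Let $S = \ags^\X$ be the agents who rank $\X$ first; since $\X$ has the maximum plurality score among $m$ alternatives, $|S| \geq n/m$. The plan is to bound $d(i,\X)$ against $d(i,\X^*)$ separately for $i \in S$ and $i \notin S$. For $i \in S$, the fact that $\X \succ_i \X^*$ gives $d(i,\X) \leq d(i,\X^*)$ immediately. For $i \notin S$, I would pick any $j \in S$ and chain triangle inequalities: $d(i,\X) \leq d(i,j) + d(j,\X) \leq d(i,j) + d(j,\X^*) \leq d(i,\X^*) + 2\,d(j,\X^*)$, where the second inequality uses $\X \succ_j \X^*$. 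Averaging the bound over $j \in S$ and summing over $i \notin S$ yields
\[
\scf{\X}{d} \leq \scf{\X^*}{d} + \frac{2(n-|S|)}{|S|}\sum_{j\in S} d(j,\X^*) \leq \frac{2n-|S|}{|S|}\cdot \scf{\X^*}{d}.
\]
Plugging in $|S| \geq n/m$ gives the ratio $(2n-n/m)/(n/m) = 2m-1$.

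\paragraph{Lower bound $\mdist(\text{Plurality}) \geq 2m-1$.} I would build a tight family of instances on the real line. Place $\X_1$ at $-D$, $\X^* := \X_m$ at $0$, and $\X_2,\ldots,\X_{m-1}$ at pairwise-distinct positions $\epsilon_k > 0$ just to the right of $0$. Partition $n$ agents (with $m \mid n$) into $m$ equal groups: $n/m$ agents sit slightly left of $-D/2$ (strictly closer to $\X_1$ than to $\X^*$, and closer to $\X^*$ than to any $\X_k$ for $k\in\{2,\ldots,m-1\}$); $n/m$ agents sit just to the left of each $\X_k$ for $k=2,\ldots,m-1$ (ranking $\X_k$ first and $\X^*$ second); and $n/m$ agents sit at position $0$ ranking $\X^*$ first. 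All alternatives then tie at $n/m$ first-place votes, so adversarial tiebreaking lets Plurality output $\X_1$. Computing social costs and letting all the small perturbations vanish gives $\scf{\X_1}{d} \to (n/m)\cdot D(m - 1/2)$ and $\scf{\X^*}{d} \to (n/m)\cdot D/2$, so the distortion approaches $2m-1$.

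\paragraph{Main obstacle.} The upper bound is routine once one sees the two-term triangle-inequality chain; the only care needed is making sure the average over $j \in S$ isolates $\scs(\X^*,d)$ cleanly. The delicate step is the lower bound: one must simultaneously (i) produce an $m$-way tie in plurality scores so that adversarial tiebreaking picks the worst alternative, (ii) keep every agent's ranking consistent with the chosen metric, and (iii) hit the extremal constants $m-1/2$ and $1/2$ in the two social costs. Handling these constraints is what forces the particular placement described above, and it is the place where a naive construction (e.g., clustering $\X_2,\ldots,\X_m$ at a single point) yields only $m-1$ rather than $2m-1$.
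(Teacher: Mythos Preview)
The paper does not supply its own proof of this proposition; it is stated with a citation to \cite{ABEP+18} and no argument is given. Your proof is correct and is essentially the standard one from that reference: the upper bound via the chain $d(i,\X)\le d(i,\X^*)+2\,d(j,\X^*)$ for $j\in\ags^\X$, averaged over $j$ and combined with $|\ags^\X|\ge n/m$, and the lower bound via a one-dimensional instance engineering an $m$-way plurality tie so that adversarial tiebreaking selects the far-away alternative.

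One small inaccuracy that does not affect the argument: your parenthetical remark that the agents placed just left of $\X_k$ rank ``$\X^*$ second'' is not generally true in your construction when $k>2$ (such an agent is closer to $\X_2,\ldots,\X_{k-1}$ than to $\X^*$). Since Plurality depends only on top choices, the tie and the social-cost computation go through regardless, so you can simply drop that claim.
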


This leaves open the question of whether there exists a deterministic rule that simultaneously achieves $O(1)$ metric distortion and $O(m^2)$ utilitarian distortion. 

\subsection{Pruned Plurality Veto}
We settle this question positively by designing a new deterministic voting rule, \emph{Pruned Plurality Veto}, and proving that it achieves the desired best-of-both-worlds guarantee. We remark that our rule does not critically rely on using Plurality Veto; any deterministic voting rule with $O(1)$ metric distortion can be used instead; the metric distortion of our rule will be a constant times greater than the metric distortion of this base rule. We use Plurality Veto because it is a simple voting rule that achieves the optimal metric distortion guarantee of $3$. 

\begin{definition}[Pruned Plurality Veto (\PPV)]
Given any $\varepsilon > 0$ and a preference profile $\prefp$, the (deterministic) Pruned Plurality Veto Rule $f^{\mathsf{\PPV}}$ computes the set of alternatives $\alts'$ with plurality score at least $\frac{\varepsilon n}{(6+\varepsilon)m}$ (note that $\alts'$ must be non-empty by the Pigeonhole principle), restricts $\prefp$ to the alternatives in $\alts'$ to obtain preference profile $\prefp'$, and runs Plurality Veto on $\prefp'$.
\end{definition}

The intuition behind the rule is simple. The argument used by \citet{CNPS17} for establishing $O(m^2)$ utilitarian distortion of Plurality continues to hold when selecting any alternative with plurality score $\Omega(n/m)$. Hence, limiting our attention to such alternatives easily takes care of the utilitarian distortion guarantee. The trick is to prove that simply running a voting rule with constant metric distortion on the preference profile restricted to such alternatives still yields (slightly higher) constant metric distortion. For this, we need two lemmas. 

The first one is due to \citet{ABEP+18}. We provide the short proof for completeness.
\begin{lemma}
\label{lem:YW-half}
    For any metric space $d$, alternatives $\X$ and $\Y$, and agent $i$ who prefers $\Y$ to $\X$ (i.e., $\Y \succ_i \X$), we have $d(i, \X) \geq d(\Y,\X)/2$.
\end{lemma}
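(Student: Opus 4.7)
The plan is to use only two facts: the consistency of the preference profile with the metric and the triangle inequality. Since agent $i$ prefers $Y$ to $X$, the consistency condition $d \mcns \prefp$ immediately gives $d(i, Y) \leq d(i, X)$. This is the key input we get from the ordinal information.

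Next, I would apply the triangle inequality to the triple $(Y, i, X)$, obtaining
\[
d(Y, X) \leq d(Y, i) + d(i, X).
\]
Combining this with the symmetry $d(Y, i) = d(i, Y)$ and the consistency bound $d(i, Y) \leq d(i, X)$, I get
\[
d(Y, X) \leq d(i, X) + d(i, X) = 2 \, d(i, X),
\]
which rearranges to the desired inequality $d(i, X) \geq d(Y, X)/2$.

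There is no real obstacle here — the statement is essentially a one-line consequence of the triangle inequality together with the definition of consistency. The only thing to be careful about is the direction of the consistency implication: preferring $Y$ to $X$ means $d(i, Y) \leq d(i, X)$ (the preferred alternative is the closer one), which is exactly the direction that lets us bound $d(Y, X)$ from above by $2 d(i, X)$.
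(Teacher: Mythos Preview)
Your proof is correct and follows exactly the same approach as the paper: apply the triangle inequality to $d(Y,X) \le d(i,Y) + d(i,X)$ and then use $d(i,Y) \le d(i,X)$ from $Y \succ_i X$ to conclude $d(Y,X) \le 2\,d(i,X)$.
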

\begin{proof}
    Note that
    \[
    d(\Y,\X) \le d(i,\Y) + d(i,\X) \le 2 d(i,\X),
    \]
    where the last inequality holds due to $\Y \succ_i \X$.
\end{proof}

The next lemma, which is the crux of our proof, shows that ignoring alternatives with low plurality score and running a voting rule on the remaining alternatives can only increase its metric distortion by a small factor. This lemma may be of independent interest.

\begin{lemma}\label{lem:subset-mdist}
    Consider any preference profile $\prefp$ and metric $d \mcns \prefp$. For some threshold $\tau < \nicefrac{1}{m}$, let $\alts' \subseteq \alts$ be the set of alternatives with plurality score at least $\tau n$. Then,
    \[
    \frac{\min_{\X \in \alts'} \scf{\X}{d}}{\min_{\X \in \alts} \scf{\X}{d}} \le 1+\frac{2}{1-\tau m}.
    \]
    Consequently, for a constant $\md$, applying a voting rule with a metric distortion of at most $\md$ on the preference profile obtained by restricting $\prefp$ to $\alts'$ incurs a metric distortion of at most $\md \cdot (1+\nicefrac{2}{(1-\tau m)})$. 
\end{lemma}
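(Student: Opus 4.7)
The plan is to prove the first inequality by a plurality-weighted averaging argument over $\alts'$, and then deduce the ``consequently'' part by noting that restricting $d$ to $\ags \cup \alts'$ preserves both the metric axioms and the consistency with $\prefp'$.

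Let $\X^* \in \argmin_{\X \in \alts} \scf{\X}{d}$; the case $\X^* \in \alts'$ is trivial, so assume $\X^* \notin \alts'$. Let $B = \bigcup_{\X \in \alts'} \ags^\X$ denote the set of agents whose first choice lies in $\alts'$. Every alternative outside $\alts'$ has plurality score strictly less than $\tau n$, and there are fewer than $m$ such alternatives, so the number of agents whose first choice is outside $\alts'$ is strictly less than $m \tau n$, giving $|B| > n(1 - \tau m)$.

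I would then bound the minimum by a \emph{plurality-weighted} average (not a uniform one), namely
\[
\min_{\X \in \alts'} \scf{\X}{d} \;\le\; \sum_{\X \in \alts'} \frac{|\ags^\X|}{|B|}\, \scf{\X}{d}.
\]
Applying the triangle inequality $\scf{\X}{d} \le \scf{\X^*}{d} + n \cdot d(\X, \X^*)$ for every $\X$, the right-hand side becomes
\[
\scf{\X^*}{d} \;+\; \frac{n}{|B|}\sum_{\X \in \alts'} |\ags^\X|\, d(\X,\X^*).
\]
Now \Cref{lem:YW-half} does the real work: for each $i \in \ags^\X$, agent $i$ prefers $\X$ to $\X^*$, so $d(i, \X^*) \ge d(\X, \X^*)/2$. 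Summing over $i \in \ags^\X$ gives $|\ags^\X|\, d(\X,\X^*) \le 2 \sum_{i \in \ags^\X} d(i,\X^*)$, and then summing over $\X \in \alts'$ and using that the sets $\ags^\X$ partition $B$ yields
\[
\sum_{\X \in \alts'} |\ags^\X|\, d(\X,\X^*) \;\le\; 2 \sum_{i \in B} d(i,\X^*) \;\le\; 2\,\scf{\X^*}{d}.
\]
Substituting and using $n/|B| < 1/(1-\tau m)$ produces the desired factor $1 + 2/(1-\tau m)$.

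For the second claim, I would observe that the restriction of $d$ to $\ags \cup \alts'$ remains a pseudometric and that $\prefp'$ is consistent with it, so any rule with metric distortion at most $\md$ returns on $\prefp'$ an alternative whose social cost is within a factor $\md$ of $\min_{\X \in \alts'} \scf{\X}{d}$; chaining with the first part gives the overall factor $\md \cdot (1 + 2/(1-\tau m))$ against $\min_{\X \in \alts} \scf{\X}{d}$. The only genuine design choice, which I would call the main subtlety rather than obstacle, is the weighting in the averaging step: a uniform average over $\alts'$ would introduce a factor $|\alts'|$ and become vacuous when $|\alts'|$ is small, whereas weighting each $\X \in \alts'$ by its plurality mass $|\ags^\X|$ exactly cancels the $|\ags^\X|$ multiplier produced by \Cref{lem:YW-half}, so that the only $m$-dependence that survives is the denominator $|B| > n(1-\tau m)$.
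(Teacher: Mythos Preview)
Your proof is correct and arrives at the same bound, but the route differs from the paper's. The paper does not average: it fixes a single witness $X^+ \in \alts'$, namely the alternative closest to the optimum $X^*$, bounds $\scf{X^+}{d} \le \scf{X^*}{d} + n\, d(X^*,X^+)$ by the triangle inequality, and then lower-bounds $\scf{X^*}{d}$ by $n(1-\tau m)\, d(X^*,X^+)/2$ using \Cref{lem:YW-half} on every agent whose top choice lies in $\alts'$ (each such agent is at distance at least $d(X^*,X^+)/2$ from $X^*$). You instead take a plurality-weighted average of $\scf{\X}{d}$ over all $\X \in \alts'$ and let \Cref{lem:YW-half} cancel the weights. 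Both arguments use exactly the same two ingredients (the triangle inequality detour through $X^*$ and \Cref{lem:YW-half} on the agents in $B$) and the same count $|B| > n(1-\tau m)$; the difference is purely structural. Your averaging trick avoids having to name a specific candidate in $\alts'$, while the paper's choice of $X^+$ gives a concrete alternative achieving the bound, which can be conceptually handy. Your treatment of the ``consequently'' clause is the same as the paper's.
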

\begin{proof}
Let $\oc \in \argmin_{\X \in \alts} \scf{\X}{d}$ be an optimal alternative and $\X^+ \in \argmin_{\X \in \alts'} d(\X,\oc)$ be the closest alternative to $\oc$ in $\alts'$. Recall that for an alternative $\Y$, $\alts^{\Y}$ denotes the set of agents who rank $\Y$ first. For each alternative $\Y \notin \alts'$, $\plu(\Y,\prefp) \le \tau n$, so $\sum_{\Y \notin \alts'} |\alts^\Y| \le \tau m n$, implying  $\sum_{\Y \in \alts'} |\alts^\Y| \ge n \cdot (1-\tau m)$. 

We now prove a lower bound on the optimal social cost.
\begin{align*}
    \scf{\oc}{d} &= \sum_{i \in \ags} d(i, \oc)  \geq \sum_{\Y \in \alts'} \sum_{i \in \alts^{\Y}} d(i, \oc) \\
    &\geq \sum_{\Y \in \alts'} \sum_{i \in \alts^{\Y}} d(\oc, \Y)/2 & (\text{By \Cref{lem:YW-half}}) \\
    &\geq \sum_{\Y \in \alts'} |\alts^{\Y}| \cdot d(\oc, \X^+)/2 & (\because \text{$\X^+$ is the closest member of $\alts'$ to $\oc$})\\
    & \geq n \cdot (1-\tau m) \cdot d(\oc, \X^+)/2.
\end{align*}
On the other hand, using the triangle inequality for each agent and adding across agents, we get
$$\scf{\X^+}{d} \leq \scf{\oc}{d} + n \cdot d(\oc, \X^+).$$

Let $\X' \in \argmin_{\X \in \alts'} \scf{\X}{d}$. Then,
\begin{align*}
\frac{\scf{\X'}{d}}{\scf{\oc}{d}} & \leq \frac{\scf{\X^+}{d}}{\scf{\oc}{d}} &\left(\because \X' \in \argmin_{\X \in \alts'} \scf{\X}{d}\right)  \\
    & \leq \frac{ \scf{\oc}{d} + n \cdot d(\oc, \X^+)}{\scf{\oc}{d}} \\
    & \leq 1 + \frac{ n \cdot d(\oc, \X^+) }{n \cdot (1-\tau m) \cdot d(\oc, \X^+)/2} = 1 + \frac{2}{1-\tau m}.
\end{align*}

Next, let $\md$ be a constant and $f$ be a voting rule with a metric distortion of at most $\md$. Let $\prefp'$ be the preference profile restricted to alternatives in $\alts'$ and $\X = f(\prefp')$. Then, we have
\[
\scf{\X}{d} \le \md \cdot \scf{\X'}{d} \le \md \cdot \left(1+\frac{2}{1-\tau m}\right) \cdot \scf{\oc}{d},
\]
where the first transition is due to the metric distortion guarantee of $f$ and the second is proven above. 
\end{proof}

We are now ready to prove the distortion guarantees of Pruned Plurality Veto. 
\begin{theorem}\label{thm:bobw-det}
    Given any $\epsilon > 0$, Pruned Plurality Veto (\PPV) with parameter $\epsilon$ has a metric distortion of $9+\varepsilon$ and a utilitarian distortion of $O(m^2/\epsilon)$.
    
\end{theorem}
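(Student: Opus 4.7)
My plan is to handle the two distortion bounds separately, with most of the heavy lifting already supplied by \Cref{lem:subset-mdist}. Let $\tau = \varepsilon/((6+\varepsilon)m)$ be the plurality-score threshold used in the definition of \PPV, and let $\alts'$ denote the resulting pruned set, which is non-empty by the Pigeonhole principle since $\tau m < 1$.

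For the metric distortion bound, I would simply instantiate \Cref{lem:subset-mdist} with this $\tau$ and with the base rule being Plurality Veto, which is known to achieve a metric distortion of $\md = 3$. A direct calculation gives $\tau m = \varepsilon/(6+\varepsilon)$, hence $1-\tau m = 6/(6+\varepsilon)$ and $2/(1-\tau m) = (6+\varepsilon)/3$. Substituting into the conclusion of the lemma yields a metric distortion of at most
\[
3 \cdot \left(1 + \frac{2}{1-\tau m}\right) = 3 \cdot \left(3 + \frac{\varepsilon}{3}\right) = 9 + \varepsilon,
\]
which is the desired bound. Note that the lemma is stated for a rule whose distortion guarantee holds on the restricted profile $\prefp'$, which is fine because Plurality Veto's metric distortion of $3$ holds on every profile, in particular on $\prefp'$ interpreted in the submetric induced by restricting alternatives to $\alts'$.

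For the utilitarian distortion bound, I would use the standard plurality-style argument going back to Caragiannis--Procaccia and \citet{CNPS17}. By construction, the selected alternative $\X = f^{\PPV}(\prefp)$ lies in $\alts'$ and so has plurality score $\plu(\X,\prefp) \ge \tau n$. For any unit-sum utility profile $\utp \ucns \prefp$, each agent who ranks $\X$ first must assign $\X$ utility at least $1/m$ (otherwise all $m$ utilities would be strictly less than $1/m$ and could not sum to $1$). Therefore
\[
\swf{\X}{\utp} \geq \plu(\X,\prefp) \cdot \tfrac{1}{m} \geq \tfrac{\tau n}{m} = \tfrac{\varepsilon n}{(6+\varepsilon)m^2}.
\]
Since any alternative has social welfare at most $n$ under unit-sum utilities, the utilitarian distortion is at most $(6+\varepsilon)m^2/\varepsilon = O(m^2/\varepsilon)$.

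There is no real obstacle here: the main technical content is already packaged in \Cref{lem:subset-mdist}, and the only care needed is in choosing $\tau$ so that the two bounds come out clean simultaneously. The choice $\tau = \varepsilon/((6+\varepsilon)m)$ is precisely calibrated so that $\md \cdot (1+2/(1-\tau m))$ telescopes to $9+\varepsilon$ while keeping $\tau$ of order $1/m$ (up to the $\varepsilon$-factor), which is exactly what the utilitarian argument needs to yield $O(m^2/\varepsilon)$. The only thing worth double-checking is that $\alts'$ being non-empty is guaranteed whenever $\tau m < 1$, i.e., whenever $\varepsilon > 0$, which is assumed.
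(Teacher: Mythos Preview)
Your proposal is correct and follows essentially the same approach as the paper: both invoke \Cref{lem:subset-mdist} with $\tau=\varepsilon/((6+\varepsilon)m)$ and $\md=3$ for the metric bound, and both use the standard plurality/unit-sum argument (top choice has utility at least $1/m$, so $\swf{\X}{\utp}\ge \tau n/m$) for the utilitarian bound. The calculations and constants match exactly.
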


\begin{proof}
Let $\prefp$ be any preference profile. Let $\alts'$ be the set of alternatives with plurality score at least $\frac{\varepsilon n}{(6+\varepsilon)m}$, and $\bar{\alts} = \alts \setminus \alts'$. Note that \PPV always returns an alternative from $\alts'$.

This is sufficient for the utilitarian distortion guarantee, since any alternative in $\alts'$ is the top choice of at least $\frac{\varepsilon n}{(6+\varepsilon)m}$ agents, and each agent has utility at least $1/m$ for her top alternative due to the Pigeonhole principle. Hence, its social welfare is at least $\frac{\varepsilon n}{(6+\varepsilon) m^2}$. Due to unit-sum utilities, the maximum social welfare of any alternative is at most $n$. Hence, the utilitarian distortion of \PPV is at most $\frac{(6+\varepsilon)m^2}{\varepsilon}$.

For the metric distortion, we simply need to apply \Cref{lem:subset-mdist} with $\md = 3$ (the metric distortion of Plurality Veto) and $\tau = \frac{\varepsilon}{(6+\varepsilon) m}$. We get that the metric distortion of \PPV is at most
\[
\md \cdot \left(1+\frac{2}{1-\tau m}\right) = 3 \cdot \left(1+\frac{2}{1-\frac{\varepsilon}{6+\varepsilon}}\right) = 3 \cdot \left(3+\frac{\varepsilon}{3}\right) = 9 + \varepsilon.\qedhere
\]
\end{proof}

While Pruned Plurality Veto achieves asymptotically optimal distortion in both metric and utilitarian frameworks simultaneously, it does not achieve metric distortion better than $9$, when the optimal metric distortion is $3$. In the metric framework, since the optimal distortion is already a constant, it is common to care about what the exact constant is. Indeed, a metric distortion of $5$ was established through Copeland's rule already in the seminal paper that introduced this framework~\cite{ABEP+18}. Reducing this to $3$ was an open question that was settled recently~\cite{GHS20}. 

This might lead one to wonder whether one can obtain a tighter best-of-both-worlds guarantee with a utilitarian distortion of $O(m^2)$ and a metric distortion of at most $3$. We show that this is not possible and leave open the question of improving upon the metric distortion of $9+\epsilon$ in \Cref{thm:bobw-det}.

\begin{theorem}\label{thm:bobw-det-imposs}
    Any deterministic rule with a utilitarian distortion of $o(m^2 \sqrt{m})$ has a metric distortion strictly greater than $3$. 
\end{theorem}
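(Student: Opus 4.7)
The plan is to prove the contrapositive: any deterministic rule $f$ with $\mdist(f) \leq 3$ must satisfy $\udist(f) = \Omega(m^2 \sqrt m)$. I would invoke the characterization of \citet{GHS20}: a deterministic rule attains metric distortion at most $3$ on every preference profile $\prefp$ if and only if its output $X$ satisfies the \emph{plurality matching} condition --- equivalently, by Hall's theorem, for every subset $T \subseteq \alts$, the number of agents whose top choice lies in $T$ is at most the number of agents who rank $X$ at least as high as some alternative in $T$.

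The technical heart of the proof is to construct, for each $m$, a preference profile $\prefp$ with two properties: (i) every alternative satisfying plurality matching on $\prefp$ either has plurality $0$ or plurality at most $n/m^{3/2}$, and (ii) a distinguished alternative $Y$ is ranked at position $2$ by every agent, so $\plu(Y,\prefp) = 0$ (in which case $Y$ satisfies plurality matching trivially via self-matching within each top-group). Property (ii) is easy to arrange. Property (i) is the main technical contribution: I would design the rankings so that every ``heavy'' alternative $Z$ with $\plu(Z,\prefp) > n/m^{3/2}$ admits an explicit Hall witness $T_Z$ --- a subset of other alternatives jointly ranked above $Z$ by a critical mass of agents --- violating Hall's inequality and hence breaking plurality matching for $Z$. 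The only surviving plurality-matching alternatives are then $Y$ and (at most) a single ``light'' alternative $X^*$ with $\plu(X^*,\prefp) \leq n/m^{3/2}$.

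Given such a profile, any rule $f$ with $\mdist(f) \leq 3$ outputs either $Y$ or the light $X^*$. If it outputs $Y$, an adversary can choose $\utp \ucns \prefp$ assigning to each agent $i$ utility $1$ for its top choice and $0$ for every other alternative (valid because $Y$ is ranked second, so $u_i(Y) = 0 \leq u_i(\mathrm{top})$), making $\swf{Y}{\utp} = 0$ and yielding unbounded utilitarian distortion. Otherwise $f$ outputs $X^*$, and I define $\utp \ucns \prefp$ as follows: each agent topping $X^*$ assigns uniform utility $1/m$ to every alternative, while every other agent assigns utility $1/2$ to each of its top choice and $Y$, and $0$ to every other alternative. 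This is consistent with $\prefp$ because $Y$ is ranked second everywhere, so $u_i(Y)$ can equal the top's utility and dominates the utility of all remaining alternatives. A direct calculation yields $\swf{X^*}{\utp} = \plu(X^*,\prefp)/m \leq n/m^{5/2}$ and $\swf{Y}{\utp} \geq (n - n/m^{3/2})/2 = \Omega(n)$, so $\udist(X^*,\utp) = \Omega(m^2 \sqrt m)$, as required.

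The main obstacle is verifying property (i): exhibiting an explicit Hall witness for every heavy alternative while preserving plurality matching for $X^*$ and $Y$ and remaining consistent with the adversarial utility profile above. A natural attempt places the heavy alternatives in a layered or cyclic pattern in which each heavy $Z$ is preceded, in most rankings, by a specific subset $T_Z$ of other heavies (perhaps together with $Y$), and the counting inequality $|\{i : i\text{'s top choice lies in } T_Z\}| > n - |\{i : Z \succeq_i y \text{ for some } y \in T_Z\}|$ is verified by careful bookkeeping of positional ranks. Simultaneously arranging this to fail for every heavy alternative --- while allowing the single light $X^*$ to evade all such witnesses --- is the delicate balancing step at the heart of the proof.
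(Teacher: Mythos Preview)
Your proposal has a genuine gap: you never actually construct the preference profile, and you explicitly acknowledge this when you call property~(i) ``the main obstacle'' and describe only a ``natural attempt'' at the layered/cyclic structure without carrying it out. Everything in your plan hinges on exhibiting, for each heavy alternative $Z$, an explicit Hall set $T_Z$ violating the matching condition, while simultaneously keeping at least one light alternative and the second-ranked $Y$ plurality-matching. Until that construction is written down and verified, there is no proof---only a specification of what a proof would need to do.

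Two smaller issues compound this. First, your appeal to the biconditional ``metric distortion $\le 3$ $\iff$ plurality matching holds'' requires the direction ``no perfect matching $\Rightarrow$ some consistent metric gives distortion $>3$,'' which is true but is not the direction \citet{GHS20} emphasize; you should point to where that converse is established (it follows from the LP/biased-metric analysis, but you should say so). Second, your claim that $Y$ ``satisfies plurality matching trivially via self-matching within each top-group'' is backwards: since $Y$ is ranked second by every agent, the edge $(i,j)$ in the domination graph for $Y$ exists precisely when $\text{top}(i)\neq\text{top}(j)$, so the required matching pairs agents \emph{across} different top-groups, not within them, and exists only if no plurality block exceeds $n/2$.

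By contrast, the paper's proof is direct and avoids the matching characterization entirely. It writes down a single explicit profile (with a universally-second alternative $X$ and three blocks $\alts_1,\alts_2,\alts_3$ of alternatives whose top-voter counts are calibrated so that $\alts_3$-alternatives have plurality $\Theta(n/m^{3/2})$), argues via two simple utility profiles that any rule with $o(m^2\sqrt m)$ utilitarian distortion must output something in $\alts_1\cup\alts_2$, and then exhibits a one-dimensional metric on which any such output has distortion $\frac{3\sqrt m+1}{\sqrt m-1}>3$. No Hall witnesses, no external characterization---just a concrete profile and a concrete metric. If you want to salvage your route, the paper's profile is in fact exactly the kind of construction you are looking for; you would then still owe the Hall-witness verification for every $Z\in\alts_1\cup\alts_2$, which is more work than simply writing down the metric.
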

\begin{proof}
    Let $\X$ be a specific alternative and partition the remaining alternatives into 3 sets, $\alts_1$, $\alts_2$, and $\alts_3$, each with $(m-1)/3$ alternatives. Furthermore, partition the agents into 3 sets, $\ags_1$ and $\ags_2$ with $n(1-1/\sqrt m)/2$ agents each, and $\ags_3$ with $n/\sqrt m$ agents.
    
    Consider the following preference profile $\prefp$. For $k \in [3]$, each alternative in $\alts_k$ appears as the top choice of $3/(m-1)$ fraction of the agents in $\ags_i$, and $\X$ appears as the second choice of every agent. Each agent prefers every alternative in $\alts_3$ to any alternative in $\alts_1 \cup \alts_2$ other than her top choice. Agents in $\ags_1$ prefer every alternative in $\alts_1$ to any alternative in $\alts_2$, and agents in $\ags_2$ prefer every alternative in $\alts_2$ to any alternative in $\alts_1$.

    Suppose $f$ is a deterministic rule with a utilitarian distortion of $o(m^2 \sqrt{m})$. Let $f(\prefp) = \X^*$. First, we prove that $\X^* \in \alts_1 \cup \alts_2$ on $\prefp$. 
    
    If $\X^* = \X$, consider the consistent utility profile in which each agent has utility $1$ for her top choice and zero for the rest. In this case, the utilitarian distortion of $\X$ is unbounded, which is a contradiction. If $\X^* \in \alts_3$, consider the consistent utility profile $\utp$ in which agents who have $\X^*$ as their top choice have utility $1/m$ for every alternative and the other agents have utility $1/2$ for their top choice and $1/2$ for $\X$. Then, $\swf{\X^*}{\utp} = O(n/(m^2 \sqrt{m}))$ whereas $\swf{\X}{\utp} = \Omega(1)$, yielding a utilitarian distortion of $\Omega(m^2 \sqrt{m})$, which is again a contradiction. Hence, we have $\X^* \in \alts_1 \cup \alts_2$.

    Without loss of generality, assume $\X^* \in \alts_1$. Consider a consistent one-dimensional distance metric $d$ over $\bbR$ in which alternatives in $\alts_1$ lie at $0$, agents in $\ags_1$ lie at $1$, and the remaining agents and alternatives lie at $2$. Thus, the distances are as follows: for all agents $i$ and alternatives $\Y$, 
    $$d(i, \Y) = \begin{cases}
    1 & i \in \ags_1 \\
    2 & i \notin \ags_1, \Y \in \alts_1 \\
    0 & i \notin \ags_1, \Y \notin \alts_1.
    \end{cases}.$$

    Now, for $\X^* \in \alts_1$, we have:
    $$\scf{\X^*}{d} = \frac{n}{2}\left(1-\frac{1}{\sqrt m}\right) + 2 \cdot \left(n - \frac{n}{2} \cdot \left(1-\frac{1}{\sqrt m}\right)\right) = n \cdot \frac{3\sqrt m + 1}{2\sqrt m}.$$
    On the other hand, 
    $$\scf{\X}{d} = 1 \cdot \frac{n}{2} \cdot \left(1-\frac{1}{\sqrt m}\right) = n \frac{\sqrt m - 1}{2\sqrt m}.$$
    Thus, we have 
    $$\mdist(f,\prefp) \geq \frac{\scf{\X^*}{d}}{\scf{\X}{d}} = \frac{3\sqrt m + 1}{\sqrt m - 1} > 3.\qedhere
    $$
\end{proof}

\section{A Randomized Voting Rule}

Next, we turn to randomized voting rules. In the metric world, the best distortion achievable by a randomized rule is known to be in $[2.0261,3)$~\cite{charikar2022metric,AP17}, and closing that gap remains a major open question.

There exist randomized voting rules that guarantee a distortion slightly better than $3$ (but their distortion converges to 3 as the number of agents or alternatives increases)~\cite{AP17,GHS20,kempe2020communication}, and notable among them is the Random Dictatorship, which returns the top alternative of an agent selected uniformly at random.

In the utilitarian world, the best achievable distortion by a randomized rule was recently shown to be $\Theta(\sqrt{m})$~\cite{EKPS22} using the Stable Lottery Rule, improving upon a  bound of $\Theta(\sqrt{m \log m})$ via the Harmonic Rule. 

Like in the case of deterministic rules, one might wonder whether these rules, which achieve near-optimal distortion in one framework, also achieve near-optimal distortion in the other framework, thus achieving the best of both worlds. Unfortunately, this is yet again not true.

We omit defining the Stable Lottery Rule and the Harmonic Rule, but it suffices to know that they assign a positive probability to every alternative. In a preference profile where an alternative $\X$ is ranked last by every agent, it could have arbitrarily large social cost, and hence, assigning any positive probability to it yields unbounded metric distortion.

\begin{proposition}
    The Stable Lottery Rule and the Harmonic Rule have unbounded metric distortion. 
\end{proposition}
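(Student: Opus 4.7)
The plan is to exploit the observation, highlighted in the paragraph just above the proposition, that both the Stable Lottery Rule and the Harmonic Rule assign strictly positive probability to every alternative on every preference profile. Fix any profile in which some alternative $\X$ is ranked last by every agent, for example the profile in which every agent reports $\Y_1 \succ \Y_2 \succ \cdots \succ \Y_{m-1} \succ \X$. On this profile, each rule outputs some distribution $p$ with $p(\X) \geq \p_0$ for a strictly positive constant $\p_0$ depending only on $m$ (for the Harmonic Rule, $\p_0$ is read off the harmonic weights; for the Stable Lottery Rule, positivity is likewise a direct consequence of its definition, as noted in the text).

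Next, I would construct a family of metrics $d$ consistent with this profile whose social cost ratios blow up. Concretely, place all $n$ agents and the alternative $\Y_1$ at the same point $0$ of the real line (or, to avoid a zero denominator, put $\Y_1$ at a tiny distance $\varepsilon$ from the agents), place $\Y_2,\ldots,\Y_{m-1}$ at locations close to but farther than $\Y_1$ in the ordering required by the ranking, and place $\X$ at location $D$ for some large parameter $D$. This clearly satisfies symmetry and the triangle inequality, and it is consistent with $\prefp$ because each agent's distances to the alternatives increase in the same order as her ranking, with $\X$ strictly farthest. For this metric, $\scf{\Y_1}{d}=n\varepsilon$ while $\scf{\X}{d}=nD$.

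Since the output distribution puts mass at least $\p_0$ on $\X$, the expected social cost satisfies
\[
\E_{\W\sim p}\scf{\W}{d} \;\geq\; \p_0 \cdot \scf{\X}{d} \;=\; \p_0 \, n D,
\]
so the metric distortion on this instance is at least $\p_0 D/\varepsilon$. Taking $D\to\infty$ (or $\varepsilon\to 0$) with the profile fixed drives this ratio to infinity, proving both rules have unbounded metric distortion.

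The only step that requires any care is step (2), namely verifying strict positivity of the mass on $\X$; but this is essentially definitional for both rules and is precisely the property the paper invokes in the preceding paragraph, so I would simply cite it rather than re-derive it. Everything else is a routine metric construction and a one-line inequality.
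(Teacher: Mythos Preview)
Your proposal is correct and follows essentially the same approach as the paper: both exploit that the two rules put strictly positive mass on every alternative, fix a profile where some alternative $\X$ is ranked last by all agents, and then pick a consistent metric in which $\X$ has arbitrarily large social cost. The paper leaves the construction at the level of a one-line sketch, whereas you have spelled out an explicit one-dimensional embedding (including the $\varepsilon$ perturbation to avoid a zero denominator), but the argument is the same.
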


In concurrent work by a subset of the authors~\cite{explainable}, it has been shown that Random Dictatorship has $\Theta(m^{1.5})$ utilitarian distortion, which is significantly worse than the optimal bound of $\Theta(\sqrt{m})$.

\begin{proposition}[\citealt{explainable}]
\label{prop:rd}
    Random Dictatorship has $\Theta(m^{1.5})$ utilitarian distortion. 
\end{proposition}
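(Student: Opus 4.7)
The plan is to establish matching upper and lower bounds of $\Theta(m^{1.5})$ on the utilitarian distortion of Random Dictatorship.

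For the upper bound, I would fix any preference profile $\prefp$ and consistent utility profile $\utp$, let $\oc$ be an optimal alternative, and write $\E[\swf{\X}{\utp}] = \tfrac{1}{n}\sum_Y n_Y\,\swf{Y}{\utp}$ for the expected welfare under Random Dictatorship, where $n_Y = \plu(Y,\prefp)$. The key is to lower bound this expected welfare in two complementary ways. First, by pigeonhole each agent $j$ has $u_j(\text{top}_j) \geq 1/m$, so $\swf{Y}{\utp} \geq n_Y/m$, and combining with $\sum_Y n_Y^2 \geq n^2/m$ (Cauchy--Schwarz) yields $\E[\swf{\X}{\utp}] \geq n/m^2$. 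Second, letting $c_Y = \sum_{j:\text{top}_j = Y} u_j(\text{top}_j)$, we have $\swf{Y}{\utp} \geq c_Y$ and $\sum_Y c_Y \geq \swf{\oc}{\utp}$ (using $u_j(\text{top}_j) \geq u_j(\oc)$). Cauchy--Schwarz in the form $(\sum_Y n_Y c_Y)(\sum_Y c_Y/n_Y) \geq (\sum_Y c_Y)^2$ together with $c_Y/n_Y \leq 1$ (so $\sum_Y c_Y/n_Y \leq m$) gives $\E[\swf{\X}{\utp}] \geq \swf{\oc}{\utp}^2/(nm)$. Therefore the distortion is at most $\min(nm/\swf{\oc}{\utp},\ \swf{\oc}{\utp}\, m^2/n)$, which attains its maximum of order $m^{1.5}$ at $\swf{\oc}{\utp} = n/\sqrt{m}$.

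For the matching lower bound, I would construct a specific instance saturating this tight regime. The construction should feature $n$ agents whose utilities are carefully calibrated so that $\oc$ collects moderate welfare from many agents (yielding $\swf{\oc}{\utp} = \Theta(n/\sqrt{m})$) while the tops of most agents have small social welfare---this requires agent top-supports to be nearly disjoint and $\oc$ to have very low plurality. Following the concurrent work \cite{explainable}, one would take a mix of agent types so that both complementary lower bounds saturate in different components of the profile, and then verify the ratio by direct calculation.

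The hardest step is the lower bound construction, because the consistency constraint $u_i(\text{top}_i) \geq u_i(\oc)$ couples $\oc$'s welfare to the top utility of every agent: making $\swf{\oc}{\utp}$ large by increasing $u_i(\oc)$ tends to increase $u_i(\text{top}_i)$ and hence $\E[\swf{\X}{\utp}]$ as well. Achieving a separation of factor $\Theta(m^{1.5})$ between these two quantities, rather than a weaker factor like $\Theta(m)$ or $\Theta(\sqrt{m})$ that more symmetric constructions yield, requires a delicate calibration of agent types and top-support structures, and is the main technical hurdle in the proof.
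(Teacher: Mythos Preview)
The paper does not actually prove this proposition: it is stated with a citation to the concurrent work~\cite{explainable} and no argument is given in the present paper. So there is no ``paper's own proof'' to compare your proposal against.

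That said, your upper-bound argument is correct and self-contained. The two complementary lower bounds on $\E_{\Y\sim f(\prefp)}[\swf{\Y}{\utp}]$ --- one via $\swf{\Y}{\utp}\ge n_\Y/m$ combined with $\sum_\Y n_\Y^2\ge n^2/m$, the other via the Cauchy--Schwarz inequality $(\sum_\Y n_\Y c_\Y)(\sum_\Y c_\Y/n_\Y)\ge(\sum_\Y c_\Y)^2$ together with $c_\Y/n_\Y\le 1$ --- are both valid, and balancing them at $\swf{\oc}{\utp}=\Theta(n/\sqrt{m})$ cleanly yields the $O(m^{1.5})$ bound.

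Your lower-bound portion, however, is only a plan, not a proof: you describe the target regime ($\swf{\oc}{\utp}=\Theta(n/\sqrt{m})$, nearly-disjoint top supports, $\oc$ with low plurality) and then defer to~\cite{explainable} for the actual construction. That is exactly what the paper does too, so in that sense you are no worse off than the paper. But if the intent was to supply a full proof, the construction is still missing. One instance that works: take $m-1$ ``ordinary'' alternatives, each the top choice of $n/(m-1)$ agents, and one alternative $\oc$ that is never anyone's top; give each agent utility $1/\sqrt{m}$ for her own top and for $\oc$ and spread the remaining $1-2/\sqrt{m}$ mass over $\sqrt{m}-2$ alternatives that are \emph{never} anyone's top (so these do not affect Random Dictatorship). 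Then $\swf{\oc}{\utp}=n/\sqrt{m}$, while every alternative in the support of Random Dictatorship has $\swf{\Y}{\utp}=\tfrac{n}{m-1}\cdot\tfrac{1}{\sqrt{m}}=\Theta(n/m^{1.5})$, giving distortion $\Theta(m)$ --- still not $m^{1.5}$. Getting the full $\Omega(m^{1.5})$ requires a more careful two-type construction (some agents concentrating utility $1$ on their top, others spreading it), which is precisely the ``delicate calibration'' you flag as the main hurdle; you have correctly identified the difficulty but not resolved it.
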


This leaves open the question of whether there exists a randomized rule that simultaneously achieves $\tilde{O}(\sqrt{m})$ utilitarian distortion and $O(1)$ metric distortion. 

Our proposed randomized voting rule, \RandMech, selects an agent uniformly at random and then selects an alternative with probability that is inversely proportional to its rank within the selected agent's preferences (i.e., drops harmonically), but with one caveat: it assigns higher probability to the alternative $\hat\X$ returned by the deterministic Plurality Veto voting rule and $0$ probability to all alternatives that the agent ranks below $\hat\X$.

\begin{definition}[\RandMech\ voting rule]
The \RandMech\ voting rule, $f^\text{TH}_\varepsilon$, is parameterized by a positive constant $\varepsilon$ and uses the following steps:
\begin{itemize}
    \item Let $\hat \X$ be the alternative that would be chosen by Plurality Veto 
    \item Select an agent $i\in\ags$ uniformly at random 
    \item  Return an alternative $\Y\in \alts$ with probability:
  $$\p(i, \Y) = \begin{cases}
\frac{\varepsilon}{6 H_mr_i(\Y)} & \text{if } \Y \succ_i \hat\X,\\
1-\sum_{\Y \succ_i \hat\X} \p(i, \Y) & \text{if } \Y = \hat\X, \\
0 & \text{otherwise.}
\end{cases}
$$

\end{itemize}
\label{def:randrule}
\end{definition}

Our main result for this section show that the \RandMech\ mechanism can essentially achieve the best of both worlds: 

\begin{theorem}
The \RandMech\ voting rule simultaneously guarantees a metric distortion of $3+\varepsilon$ and a utilitarian distortion of $O(\sqrt{m}H_m/\varepsilon)$.
\end{theorem}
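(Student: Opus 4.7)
The plan is to analyze the two distortion bounds separately, exploiting that the rule returns $\hat{\X}$ (the Plurality Veto output) with probability at least $1-\varepsilon/6$ from every sampled voter (since $\sum_{\Y \succ_i \hat{\X}} \varepsilon/(6 H_m r_i(\Y)) \le \varepsilon H_{r_i(\hat{\X})-1}/(6 H_m) \le \varepsilon/6$), while the residual mass is distributed harmonically over alternatives ranked above $\hat{\X}$. For the metric distortion of $3+\varepsilon$, I would argue voter-by-voter. For each voter $j$, I would decompose $\E[d(j, \X)]$: the contribution is $d(j, \hat{\X})$ when $\hat{\X}$ is returned, and when some $\Y \succ_i \hat{\X}$ is returned, consistency gives $d(i, \Y) \le d(i, \hat{\X})$, after which two applications of the triangle inequality produce $d(j, \Y) \le 2 d(j, i) + d(j, \hat{\X})$. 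Summing the bounded probability mass contributions yields $\E[d(j, \X)] \le d(j, \hat{\X}) + (\varepsilon/(3n))\sum_i d(j, i)$, and summing over $j$, then invoking $\sum_{i,j} d(i, j) \le 2n\,\scf{\oc}{d}$ (via triangle inequality through $\oc$) and the Plurality Veto bound $\scf{\hat{\X}}{d} \le 3\,\scf{\oc}{d}$, gives $\E[\scf{\X}{d}] \le (3 + 2\varepsilon/3)\,\scf{\oc}{d} \le (3+\varepsilon)\,\scf{\oc}{d}$.

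For the utilitarian distortion, I would perform a case analysis on $\swf{\hat{\X}}{\utp}$ relative to $W^* := \swf{\oc}{\utp}$. Let $S = \{i : \oc \succ_i \hat{\X}\}$. In Case~1, when $\swf{\hat{\X}}{\utp} \ge W^*/2$, the $(1-\varepsilon/6)$ probability mass on $\hat{\X}$ alone delivers constant distortion. In Case~2, $\swf{\hat{\X}}{\utp} < W^*/2$, and consistency (giving $u_i(\hat{\X}) \ge u_i(\oc)$ for $i \notin S$) implies $\sum_{i \in S} u_i(\oc) \ge W^*/2$. Exploiting the unit-sum identity $r_i(\oc) \le 1/u_i(\oc)$, we obtain $\sum_{i \in S} 1/r_i(\oc) \ge W^*/2$, so the probability of returning $\oc$ is at least $\varepsilon W^*/(12 n H_m)$. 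This yields $\E[\swf{\X}{\utp}] \ge \varepsilon (W^*)^2/(12 n H_m)$ and distortion at most $12 n H_m/(\varepsilon W^*)$, which matches the target $O(\sqrt{m}H_m/\varepsilon)$ whenever $W^* \ge n/\sqrt{m}$.

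The main obstacle is the remaining regime $W^* < n/\sqrt{m}$, which I plan to dispatch via an aggregate Cauchy--Schwarz argument. Define $\swf{\Y}{\utp}^S := \sum_{i: \Y \succ_i \hat{\X}} u_i(\Y)$ and observe that consistency gives $\swf{\Y}{\utp}^S \ge \max\bigl(0,\, \swf{\Y}{\utp} - \swf{\hat{\X}}{\utp}\bigr)$, because the ``missing'' utility of $\Y$ from voters outside $\{i: \Y \succ_i \hat{\X}\}$ is dominated by $\swf{\hat{\X}}{\utp}$. Combining this with $\sum_\Y \swf{\Y}{\utp} = n$ and Cauchy--Schwarz (yielding $\sum_\Y (\swf{\Y}{\utp})^2 \ge n^2/m$) gives $\sum_\Y \swf{\Y}{\utp}\,\swf{\Y}{\utp}^S \ge n^2/m - n\,\swf{\hat{\X}}{\utp}$. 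Plugging this into the bound $\E[\swf{\X}{\utp}] \ge (\varepsilon/(6nH_m))\sum_\Y \swf{\Y}{\utp}\,\swf{\Y}{\utp}^S + (1-\varepsilon/6)\,\swf{\hat{\X}}{\utp}$ and simplifying (the coefficient of $\swf{\hat{\X}}{\utp}$ stays non-negative for $\varepsilon \le 3$) gives $\E[\swf{\X}{\utp}] \ge \varepsilon n/(6 m H_m)$, hence distortion at most $6 m H_m W^*/(\varepsilon n) \le 6\sqrt{m}H_m/\varepsilon$. Combined with the earlier cases, this delivers the claimed $O(\sqrt{m}H_m/\varepsilon)$ utilitarian distortion; I expect the intricacy to lie primarily in executing this aggregate argument cleanly while tracking the interplay between the $\hat{\X}$ term and the harmonic tail.
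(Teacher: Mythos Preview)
Your proposal is correct on both fronts. The utilitarian argument is structurally the same as the paper's: both derive the two complementary bounds $\udist \le O(mH_m\,W^*/(\varepsilon n))$ and $\udist \le O(nH_m/(\varepsilon W^*))$ and then combine them (you by a case split on $W^*$ versus $n/\sqrt{m}$, the paper via $\min(a,b)\le\sqrt{ab}$). The intermediate steps differ cosmetically---the paper parametrizes by $S=\sum_i\sum_{Y\succ_i\hat\X}u_i(Y)$ and uses $\swf{\hat\X}{\utp}\ge(n-S)/m$, whereas you keep $\swf{\hat\X}{\utp}$ as the free variable and use $\swf{Y}{\utp}^S\ge\swf{Y}{\utp}-\swf{\hat\X}{\utp}$---but both reach the same inequality $\E[\sw]\ge \varepsilon n/(6mH_m)$, and both obtain the second bound from $p(\oc)\ge \varepsilon\,\swfs{\su}{\oc}{\utp}/(6nH_m)$ together with $r_i(\oc)\le 1/u_i(\oc)$.

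The metric argument, however, is a genuinely different route. The paper builds an ``$\hat\X$-truncated weight function'' framework (generalizing lemmas of Anshelevich--Postl): it shows that for any such weight function $w$, $\mdist(f(\prefp),d)\le \md\bigl(1+2n\sum_Y p(Y)d(\hat\X,Y)\big/\sum_Y w^+(Y)d(\hat\X,Y)\bigr)$, and then observes that for \RandMech\ one can take $p(i,Y)=\tfrac{\varepsilon}{6}\,\wfx(i,Y)$ for $Y\neq\hat\X$, making the ratio exactly $\varepsilon/6$. Your argument bypasses this machinery entirely: from $Y\succ_i\hat\X$ and two triangle inequalities you get $d(j,Y)\le 2d(i,j)+d(j,\hat\X)$, sum the residual $\le\varepsilon/6$ mass, and invoke $\sum_{i,j}d(i,j)\le 2n\,\scf{\oc}{d}$. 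This is more elementary and even yields the slightly sharper constant $3+2\varepsilon/3$. The trade-off is reusability: the paper's truncated-weight-function lemma is a standalone tool that it later reapplies verbatim to the Top-$t$ Truncated Harmonic rule, whereas your direct bound is tailored to the full-ranking case.
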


We provide the proof of this theorem in the following two subsections, first for the metric world and then for the utilitarian one.

\subsection{Metric Distortion Guarantees}
We start off by defining the class of $\X$-truncated weight functions and proving a very useful lemma that extends Lemma 3 from \cite{AP17} to arbitrary $\X$-truncated weight functions (the previous result applied only to weight functions that assigned all their weight to the top alternative of each agent).

\begin{definition}
    Given an alternative $\X \in \alts$ and a preference profile $\prefp$, we call a function $\wf: \ags \times \alts \to [0, 1]$ an $\X$-truncated weight function ($\X$-TWF) if for each agent $i\in\ags$ it satisfies:
    \begin{itemize}
        \item For all $\Y\in\alts$ such that $\X \succ_i \Y$, the function assigns zero weight, i.e., $\wf(i, Y) = 0$,
        \item and for all other $\Y\in \alts$ the weight adds up to one: $\sum_{\substack{\Y \in \alts:\\ \Y \succeq_i \X}} \wf(i, Y) = 1$.
    \end{itemize}
    For such a function we let $\wf^+(\Y) = \sum_{i \in \ags} \wf(i, \Y)$ denote the total weight assigned to $\Y\in\alts$.
\end{definition}

\begin{lemma}
\label{lem:sc-up}

Given an alternative $\X \in \alts$, a preference profile $\prefp$, and any $\X$-truncated weight function $\wf_\X$, we have $\scf{\X}{d} \geq \frac{1}{2} \sum_{\Y \in \alts} \wf_\X^+(\Y)d(\X, \Y)$ for any metric $d \mcns \prefp$.
\end{lemma}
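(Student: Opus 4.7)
The plan is to prove this fiberwise, i.e., to show the inequality holds for each agent separately and then sum across agents. For a fixed agent $i \in \ags$, the defining property of the $\X$-truncated weight function gives $\sum_{\Y \succeq_i \X} \wf_\X(i, \Y) = 1$, so I can write $d(i, \X) = \sum_{\Y \succeq_i \X} \wf_\X(i, \Y) \cdot d(i, \X)$, turning a single distance into a convex combination indexed by the support of the weight function.

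Next, I would bound $d(i, \X)$ from below termwise by $d(\X, \Y)/2$ for each $\Y$ in the support. When $\Y \succ_i \X$ strictly, this is exactly the content of \Cref{lem:YW-half}. When $\Y = \X$, the bound $d(i, \X) \geq d(\X, \X)/2 = 0$ holds trivially. This is why the truncation at $\X$ is essential: we must only invoke the halving inequality for alternatives agent $i$ weakly prefers to $\X$, since otherwise Lemma 1 would go the wrong direction.

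Putting these together, for each agent $i$ we get
\[
d(i, \X) \;\geq\; \tfrac{1}{2}\sum_{\Y \succeq_i \X} \wf_\X(i, \Y) \, d(\X, \Y) \;=\; \tfrac{1}{2}\sum_{\Y \in \alts} \wf_\X(i, \Y) \, d(\X, \Y),
\]
where the equality uses that $\wf_\X(i, \Y) = 0$ whenever $\X \succ_i \Y$, so extending the sum to all $\Y \in \alts$ adds only zero terms. Summing over $i \in \ags$ and swapping the order of summation yields
\[
\scf{\X}{d} \;=\; \sum_{i \in \ags} d(i, \X) \;\geq\; \tfrac{1}{2}\sum_{\Y \in \alts} \left(\sum_{i \in \ags} \wf_\X(i, \Y)\right) d(\X, \Y) \;=\; \tfrac{1}{2}\sum_{\Y \in \alts} \wf_\X^+(\Y) \, d(\X, \Y),
\]
which is the claimed bound. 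There is no real obstacle here beyond correctly leveraging the truncation property to justify applying \Cref{lem:YW-half} to every $\Y$ in the support; the argument is a direct generalization of the proof in \cite{AP17}, which corresponds to the special case where $\wf_\X(i, \cdot)$ places all its mass on agent $i$'s top alternative.
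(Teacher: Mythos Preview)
Your proposal is correct and follows essentially the same approach as the paper: expand $d(i,\X)$ as a convex combination using the weight function, apply \Cref{lem:YW-half} termwise, extend the inner sum to all of $\alts$ using the truncation property, then sum over agents and swap the order of summation. If anything, you are slightly more careful in explicitly separating the $\Y=\X$ case (where \Cref{lem:YW-half} is vacuous since $d(\X,\X)=0$) from the strict-preference case.
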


\begin{proof}
    \begin{align*}
        \scf{\X}{d} &= \sum_{i \in \ags} d(i, \X) \\ & = \sum_{i \in \ags} \sum_{\substack{\Y \in \alts:\\ \Y \succeq_i \X}} \wf_\X(i, \Y) d(i,\X) & (\because \forall i\in \ags \colon \sum_{\substack{\Y \in \alts:\\ \Y \succeq_i \X}}\wf_\X(i,\Y)=1 )\\
        & \geq \sum_{i \in \ags} \sum_{\substack{\Y \in \alts:\\ \Y \succeq_i \X}} \wf_\X(i, \Y) d(\X,\Y)/2 & (\text{by \Cref{lem:YW-half} and $\Y\succeq_i \X$})\\
        &= \sum_{i \in \ags} \sum_{\Y \in \alts} \wf_\X(i, \Y) d(\X,\Y)/2 & (\because \text{ $w(i,\Y)=0$ when $\X\succ_i\Y$})\\
        &= \sum_{Y \in \alts} \sum_{i \in \ags} \wf_\X(i, Y)d(X,Y)/2 \\ &= \sum_{Y \in \alts} \wf_\X^+(Y)d(X,Y)/2.\qedhere
    \end{align*}
\end{proof}

The next lemma generalizes Lemma 4 from \cite{AP17} which is restricted to the special case where the weight function assigns all its weight to each agent's top alternative.

\begin{lemma}
\label{thm:metric}
Consider any preference profile $\prefp$, any metric $d \mcns \prefp$, any alternative $\X\in\alts$, and any $\X$-truncated weight function $\wf_{\X}$. Then, if the metric distortion of $\X$ according to $d$ is $\md$, for any voting rule $f$  that assigns probability $p(\Y)$ to each alternative $\Y\in\alts$, we have

    $$\mdist(f(\prefp), d) \leq \md\left(1 + \frac{2n\sum_{Y \in \alts} p(\Y)  d(\X, \Y)}{\sum_{\Y \in \alts}  \wf^+_{\X}(\Y)d( \X, \Y)}\right).$$

\end{lemma}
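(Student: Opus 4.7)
The plan is to bound the expected social cost of $f(\prefp)$ in terms of $\scf{\X}{d}$ and the distances $d(\X,\Y)$, then convert everything into the form stated by plugging in the two available ingredients: (i) the assumed metric-distortion bound $\scf{\X}{d}\leq \md \cdot \min_{\Y} \scf{\Y}{d}$, and (ii) \Cref{lem:sc-up}, which lower-bounds $\scf{\X}{d}$ by $\tfrac12 \sum_\Y \wf_\X^+(\Y)\, d(\X,\Y)$.

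First I would unroll the definition of metric distortion:
\[
\mdist(f(\prefp),d) \;=\; \frac{\sum_{\Y\in\alts} p(\Y)\, \scf{\Y}{d}}{\min_{\Y\in\alts} \scf{\Y}{d}}.
\]
The assumption $\mdist(\X, d)=\md$ (or at most $\md$) gives $\min_{\Y} \scf{\Y}{d} \geq \scf{\X}{d}/\md$, so replacing the denominator yields the clean upper bound $\md \cdot \sum_\Y p(\Y)\, \scf{\Y}{d} / \scf{\X}{d}$.

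Next, I would bound the numerator by applying the triangle inequality agent by agent: for every $i\in\ags$ and every $\Y\in\alts$, $d(i,\Y) \leq d(i,\X) + d(\X,\Y)$. Summing over $i$ gives $\scf{\Y}{d} \leq \scf{\X}{d} + n\cdot d(\X,\Y)$. Since $\sum_\Y p(\Y)=1$, averaging over $\Y\sim p$ produces
\[
\sum_{\Y\in\alts} p(\Y)\, \scf{\Y}{d} \;\leq\; \scf{\X}{d} + n \sum_{\Y\in\alts} p(\Y)\, d(\X,\Y).
\]
Dividing by $\scf{\X}{d}$ and multiplying by $\md$ gives
\[
\mdist(f(\prefp),d) \;\leq\; \md\left(1 + \frac{n \sum_{\Y} p(\Y)\, d(\X,\Y)}{\scf{\X}{d}}\right).
\]

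Finally, I would invoke \Cref{lem:sc-up} with the $\X$-truncated weight function $\wf_\X$ to replace $\scf{\X}{d}$ in the denominator by the smaller quantity $\tfrac12 \sum_\Y \wf_\X^+(\Y)\, d(\X,\Y)$, which can only increase the fraction and yields exactly the claimed expression with the factor $2$ out front. The only conceptual step to get right is the direction of the two inequalities (upper bound on $\scf{\Y}{d}$ via triangle inequality, lower bound on $\scf{\X}{d}$ via \Cref{lem:sc-up}); once they are applied on the correct sides of the ratio, the stated bound follows with no further calculation. There is no real obstacle — this is essentially the argument of Anshelevich--Postl, generalized only by the fact that \Cref{lem:sc-up} works for any $\X$-TWF rather than only the top-choice indicator.
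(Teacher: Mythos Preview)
Your proposal is correct and follows essentially the same approach as the paper's proof: replace the optimal social cost in the denominator using the distortion bound on $\X$, bound each $\scf{\Y}{d}$ via the triangle inequality, and then lower-bound $\scf{\X}{d}$ using \Cref{lem:sc-up}. The steps, their order, and the key ingredients all match.
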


\begin{proof}
If $\oc$ is the optimal alternative in the underlying metric space, we have:
\begin{align*}
    \mdist(f(\prefp), d) &= \frac{\E_{\Y \sim f(\prefp)}[\scf{\Y}{d}]}{\scf{\oc}{d}} \\
    & \leq \frac{\md\E_{\Y \sim f(\prefp)}[\scf{\Y}{d}]}{\scf{\X}{d}} & (\text{since $\scf{\X}{d}\leq \md\cdot \scf{\oc}{d}$})\\
    & = \frac{\md\sum_{\Y \in \alts} p(\Y) \scf{\Y}{d}}{\scf{\X}{d}} \\
    &\leq \frac{\md\sum_{\Y \in \alts} p(\Y) \left(\scf{\X}{d} + n \cdot d(\X, \Y)\right)}{\scf{\X}{d}} & (\text{by triangle inequality})\\
    &= \md\left(1 + \frac{n\sum_{\Y \in \alts} p(\Y)  d( \X, \Y)}{\scf{\X}{d}}\right) & (\text{since }\sum_{\Y \in \alts} p(\Y) = 1) \\
    &\leq \md\left(1 + \frac{n\sum_{Y \in \alts} p(\Y)  d(\X, \Y)}{\sum_{\Y \in \alts} \frac{1}{2} \wf^+_{\X}(\Y)d( \X, \Y)}\right) & (\text{by \Cref{lem:sc-up}}) \\
    &= \md\left(1 + \frac{2n\sum_{Y \in \alts} p(\Y)  d(\X, \Y)}{\sum_{\Y \in \alts} \wf^+_{\X}(\Y)d( \X, \Y)}\right).\qedhere
\end{align*}
\end{proof}

\begin{corollary}
The metric distortion of the \RandMech\ voting rule is $3+\varepsilon$.
\end{corollary}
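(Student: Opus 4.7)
The plan is to invoke \Cref{thm:metric} with $X=\hat X$ (the Plurality Veto winner) and a cleverly chosen $\hat X$-truncated weight function. Since Plurality Veto has metric distortion $3$, we have $\md=3$ for $\hat X$. So the lemma reduces our task to exhibiting an $\hat X$-TWF $w_{\hat X}$ for which
\[
\frac{2n\sum_{Y\in\alts} p(Y)\, d(\hat X,Y)}{\sum_{Y\in\alts} w^+_{\hat X}(Y)\, d(\hat X,Y)} \;\le\; \frac{\varepsilon}{3},
\]
which will then yield the desired bound $3(1+\varepsilon/3)=3+\varepsilon$. Here $p(Y)$ denotes the overall probability that \RandMech\ selects $Y$, namely $p(Y)=\frac{1}{n}\sum_{i\in\ags} \p(i,Y)$.

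The key idea is to match the weight function to the rule's sampling distribution. For each agent $i$, I would define
\[
w_{\hat X}(i,Y) \;=\; \begin{cases} \dfrac{1}{r_i(Y)\,H_m} & \text{if } Y\succ_i \hat X,\\[4pt] 1-\dfrac{H_{r_i(\hat X)-1}}{H_m} & \text{if } Y=\hat X,\\[4pt] 0 & \text{otherwise.}\end{cases}
\]
I would first verify the two $\hat X$-TWF conditions: the zero constraint on alternatives below $\hat X$ holds by construction, and the sum over $Y\succeq_i \hat X$ telescopes since $\sum_{k=1}^{r_i(\hat X)-1}\frac{1}{k H_m}=\frac{H_{r_i(\hat X)-1}}{H_m}$, so the total is $1$. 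Thus $w_{\hat X}$ is a legitimate $\hat X$-TWF, and in particular the weight placed on $\hat X$ itself is nonnegative since $r_i(\hat X)-1\le m$.

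The critical observation is then that, for every $Y\neq \hat X$,
\[
p(Y) \;=\; \frac{1}{n}\sum_{i:\,Y\succ_i \hat X}\frac{\varepsilon}{6 H_m r_i(Y)} \;=\; \frac{\varepsilon}{6n}\cdot \sum_{i:\,Y\succ_i \hat X}\frac{1}{r_i(Y)\,H_m} \;=\; \frac{\varepsilon}{6n}\, w^+_{\hat X}(Y).
\]
Since $d(\hat X,\hat X)=0$, both numerator and denominator in the target ratio are supported on $Y\neq \hat X$, so substituting the identity above gives
\[
\frac{2n\sum_{Y} p(Y)\, d(\hat X,Y)}{\sum_{Y} w^+_{\hat X}(Y)\, d(\hat X,Y)}
\;=\; 2n\cdot \frac{\varepsilon}{6n} \;=\; \frac{\varepsilon}{3},
\]
and \Cref{thm:metric} completes the proof. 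The main obstacle is identifying the right weight function: the natural attempt $w_{\hat X}(i,Y)=\p(i,Y)$ gives the ratio $2$, yielding only metric distortion $9$, so one needs to push as much mass as possible onto the alternatives ranked above $\hat X$ and rely on the $H_m$ factor in the rule's definition to buy the extra slack. Once that is done, the bound is essentially arithmetic.
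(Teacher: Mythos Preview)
Your proposal is correct and follows essentially the same approach as the paper: define the $\hat X$-TWF with weights $\frac{1}{H_m r_i(Y)}$ above $\hat X$ (the paper writes the $\hat X$-case implicitly as $1-\sum_{Y\succ_i\hat X}\wfx(i,Y)$ rather than your explicit $1-H_{r_i(\hat X)-1}/H_m$, but these are identical), observe that $p(i,Y)=\tfrac{\varepsilon}{6}\,\wfx(i,Y)$ for $Y\neq\hat X$, drop the $Y=\hat X$ term via $d(\hat X,\hat X)=0$, and apply \Cref{thm:metric} with $\md=3$. Your closing remark about why the naive choice $w_{\hat X}=\p$ only yields distortion $9$ is a nice piece of intuition that the paper does not make explicit.
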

\begin{proof}
First, let us define $\hat \X$-truncated weight function $\wfx$ as follows:

$$ \wfx(i, \Y) = \begin{cases}
\frac{1}{H_mr_i(\Y)} & \Y \succ_i \X,\\
1-\sum_{\Y \succ_i \X} \wfx(i, \Y) & \Y = \X, \\
0 & o.w.
\end{cases}.$$
Note that by the definition of $f^\text{TH}_\varepsilon$, we have $p(i, \Y) = \frac{\varepsilon}{6}\wfx(i, \Y)$ for $\Y \neq \hat \X$. In addition, for \emph{any} metric distance $d$ the alternative $\hat\X$ used by the \RandMech\ voting rule to define its truncated weight function has metric distortion at most $3$. Therefore, \Cref{thm:metric} implies that the distortion of the alternative chosen by this voting rule for any distance $d$, inducing preference profile $\prefp$, and using $\wfx$ as the truncated weight function, is:
        \begin{equation*}
    \mdist(f^\text{TH}_\varepsilon(\prefp),d)
     \leq 3 + \frac{6n\sum_{\Y \in \alts} p(\Y)  d(\hat \X, \Y)}{\sum_{\Y \in \alts}  \wfxp(\Y)d( \hat \X, \Y)},
     \end{equation*}
     where $p(\Y)$ is the probability that the \RandMech\ rule returns alternative $\Y$. Note that since we choose each agent $i$ uniformly at random, this probability is $p(\Y)=\frac{1}{n}\sum_i \p(i, \Y)$, so

    \begin{align*}
    \mdist(f^\text{TH}_\varepsilon(\prefp),d)
     & \leq 3 + \frac{6n\sum_{Y \in \alts} \frac{d(\hat \X, \Y) \sum_{i \in \ags} p(i, \Y) }{n}}{\sum_{\Y \in \alts}  \wfxp(\Y)d( \hat \X, \Y)} \\
     & = 3 + \frac{6\sum_{Y \neq \hat \X}d(\hat \X, \Y) \sum_{i \in \ags} p(i, \Y)}{\sum_{\Y \in \alts}  \wfxp(\Y)d( \hat \X, \Y)}  & (d(\X, \X) = 0)\\
     & = 3 + \frac{6\sum_{Y \neq \hat \X} \frac{\varepsilon}{6} \wfxp(\Y) d(\hat \X, \Y)}{\sum_{\Y \in \alts}  \wfxp(\Y)d( \hat \X, \Y)} & (p(i, \Y) = \frac{\varepsilon}{6}\wfx(i, \Y))  \\
    & = 3 + \varepsilon.\qedhere
\end{align*}

\end{proof}

\subsection{Utilitarian Distortion Guarantees}

\begin{theorem}
    The utilitarian distortion of the \RandMech\ voting rule is $O(\sqrt m H_m)$.
\end{theorem}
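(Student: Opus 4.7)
The plan is to emulate the classical $O(\sqrt{mH_m})$ analysis of the Harmonic Rule (Boutilier et al.) but account for the truncation by tracking the quantity $B'_\Y := \sum_{i:\,\Y \succ_i \hat\X} 1/r_i(\Y)$. Unpacking $p(\Y) = \frac{1}{n}\sum_i p(i,\Y)$ from the rule's definition and using $\sum_{\Y \succ_i \hat\X} 1/(H_m r_i(\Y)) \leq 1$, I would first show $p(\hat\X) \geq 1-\varepsilon/6$ and derive the identity $\E[\swf{\X}{\utp}] = p(\hat\X)\,\swf{\hat\X}{\utp} + \frac{\varepsilon}{6nH_m}\sum_{\Y \in \alts}\swf{\Y}{\utp}\,B'_\Y$.

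The crux is the lemma $B'_\Y \geq \swf{\Y}{\utp} - \swf{\hat\X}{\utp}$. By unit-sum monotonicity $u_i(\Y) \leq 1/r_i(\Y)$, so $B'_\Y \geq \sum_{i:\,\Y \succ_i \hat\X} u_i(\Y)$; and for each remaining agent $i$ (those with $\hat\X \succeq_i \Y$) we have $u_i(\Y) \leq u_i(\hat\X)$, so their contribution to $\swf{\Y}{\utp}$ totals at most $\swf{\hat\X}{\utp}$. Subtracting yields the lemma, and its subtracted $\swf{\hat\X}{\utp}$ will be absorbed by the $p(\hat\X)\,\swf{\hat\X}{\utp}$ term on the welfare side.

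From here I would extract two lower bounds on $\sum_\Y \swf{\Y}{\utp}\,B'_\Y$. Keeping only the $\Y = \X^*$ term (with $\X^* := \argmax_\Y \swf{\Y}{\utp}$ and $V := \swf{\X^*}{\utp}$) gives $V(V-\swf{\hat\X}{\utp})$. Multiplying the lemma by $\swf{\Y}{\utp} \geq 0$, summing over $\Y$, and applying the Cauchy--Schwarz bound $\sum_\Y \swf{\Y}{\utp}^2 \geq (\sum_\Y \swf{\Y}{\utp})^2/m = n^2/m$ (using $\sum_\Y \swf{\Y}{\utp} = n$) gives $n^2/m - n\,\swf{\hat\X}{\utp}$. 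Plugging each into the welfare decomposition, the coefficient of $\swf{\hat\X}{\utp}$ becomes $(1-\varepsilon/6) - \varepsilon V/(6nH_m)$ or $(1-\varepsilon/6) - \varepsilon/(6H_m)$, both non-negative for small $\varepsilon$; dropping the resulting non-negative $\swf{\hat\X}{\utp}$ term leaves the two clean bounds $\E[\swf{\X}{\utp}] \geq \varepsilon V^2/(6nH_m)$ and $\E[\swf{\X}{\utp}] \geq \varepsilon n/(6mH_m)$.

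A one-line case analysis closes the proof: if $V \geq n/\sqrt m$, the first bound yields distortion $V/\E[\swf{\X}{\utp}] \leq 6nH_m/(\varepsilon V) \leq 6\sqrt m H_m/\varepsilon$, and if $V < n/\sqrt m$ the second yields distortion $\leq 6mH_m V/(\varepsilon n) \leq 6\sqrt m H_m/\varepsilon$. I expect the main obstacle to be the $B'_\Y$ lemma together with verifying the non-negativity of the $\swf{\hat\X}{\utp}$ coefficient after substitution --- this is precisely where the truncation interacts with unit-sum utilities, and it is what lets the rule inherit essentially the Harmonic Rule's utilitarian guarantee despite routing most of its probability mass through $\hat\X$, whose own social welfare can be arbitrarily bad.
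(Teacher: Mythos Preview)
Your proof is correct and follows the same two-bound skeleton as the paper: show $\E_{\X\sim f^{\text{TH}}_\varepsilon(\prefp)}[\swf{\X}{\utp}] \geq \Omega(\varepsilon n/(mH_m))$ and $\E_{\X\sim f^{\text{TH}}_\varepsilon(\prefp)}[\swf{\X}{\utp}] \geq \Omega(\varepsilon V^2/(nH_m))$, then combine (your threshold $V \gtrless n/\sqrt m$ is equivalent to the paper's $\min(a,b)\le\sqrt{ab}$). The technical realization differs, however. The paper reaches the first bound via $\swf{\hat\X}{\utp}\ge (n-\Su)/m$ together with Cauchy--Schwarz on the truncated welfares $\swfs{\su}{\Y}{\utp}=\sum_{i:\Y\succ_i\hat\X}u_i(\Y)$, and reaches the second by splitting $\swf{\oc}{\utp}$ over the agent sets $\ags_1=\{i:\hat\X\succ_i\oc\}$ and $\ags_2$. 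You instead route both bounds through the single inequality $B'_\Y \ge \swf{\Y}{\utp}-\swf{\hat\X}{\utp}$, which is a clean strengthening of the paper's $p(\Y)\ge \varepsilon\,\swfs{\su}{\Y}{\utp}/(6nH_m)$ (equivalently $B'_\Y\ge\swfs{\su}{\Y}{\utp}$); the subtracted $\swf{\hat\X}{\utp}$ is then absorbed by the $p(\hat\X)\,\swf{\hat\X}{\utp}$ term, whose coefficient you correctly verify stays nonnegative. Your packaging is more uniform and slightly tighter in constants; the paper's is more explicit about which agents contribute where.
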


\begin{proof}
Consider any preference profile $\prefp$ and any utility profile $\utp \ucns \prefp$. Let $\oc$ be the optimal alternative according to $\utp$. We partition the agents into two sets $\ags_1 = \set{i \in \ags: \hat \X \succ_i X^*}$,
and $\ags_2 = \ags \setminus \ags_1$. For each $k\in\{0,1\}$, we use $\swfs{k}{\Y}{\utp} = \sum_{i \in \ags_k} u_i(\Y)$ to denote the total value of alternative $\Y$ for the agents in $\ags_k$, and let $\Su = \sum_{i \in \ags} \sum_{\Y \succ_i \hat \X } u_i(\Y)$ be the sum of the utilities of the agents for the alternatives they prefer to $\hat \X$.
    In addition for alternative $\Y$ let $\swfs{\su}{\Y}{\utp}$ be the utility that she gains when she appears above $\hat \X$, i.e., $  \swfs{\su}{\Y}{\utp} = \sum_{\substack{i \in \ags, \\ \Y \succ_i \hat \X}} u_i(\Y).$
    Note that $p(\Y) \geq \nicefrac{\varepsilon \swfs{\su}{\Y}{\utp}}{(6n H_m)}$, and the probability of $\hat \X$ being selected by this rule is at least $1-\ex$. We have:

\begin{align*}
    \E_{\Y \sim f^\text{TH}_\varepsilon(\prefp)}[\swf{\Y}{\utp}] & \geq \left(1-\ex\right)\swf{\hat \X}{\utp} + \sum_{\Y \neq \hat \X} p(\Y)\swf{\Y}{\utp} \\
    &\geq \left(1-\ex\right)\swf{\hat \X}{\utp} + \sum_{\Y \neq \hat \X} \frac{\varepsilon \swfs{\su}{\Y}{\utp}}{6 nH_m}\swf{\Y}{\utp} \\
    &\geq \left(1-\ex\right)\swf{\hat \X}{\utp} + \sum_{\Y \neq \hat \X} \frac{\varepsilon \swfs{\su}{\Y}{\utp}^2}{6nH_m} \\
    &\geq \left(1-\ex\right)\swf{\hat \X}{\utp} + \frac{\varepsilon}{6nmH_m} \left(\sum_{\Y \neq \hat \X} \swfs{\su}{\Y}{\utp} \right)^2\\
    & = \left(1-\ex\right)\frac{n-\Su}{m} + \frac{\varepsilon \Su^2}{6nmH_m} & (\because \text{Decreasing in terms of } S)\\
     &\geq \frac{\varepsilon n}{6mH_m}.
\end{align*}
That means 
\begin{equation}
    \udist(f^\text{TH}_\varepsilon(\prefp), \utp) \leq \frac{\swf{\oc}{\utp}}{\nicefrac{\varepsilon n}{(6mH_m)}} = \frac{6mH_m\swf{\oc}{\utp}}{\varepsilon n}.
    \label{eq:udist-sw}
\end{equation}
On the other hand, we have 
    \begin{align*}
        \udist(f^\text{TH}_\varepsilon(\prefp), \utp) &=\frac{\swf{\oc}{\utp}}{\E_{\Y \sim f^\text{TH}_\varepsilon(\prefp)}[\swf{\Y}{\utp}]} \\
        &=\frac{\swfs{1}{\oc}{\utp}}{\E_{\Y \sim f^\text{TH}_\varepsilon(\prefp)}[\swf{\Y}{\utp}]} + \frac{\swfs{2}{\oc}{\utp}}{\E_{\Y \sim f^\text{TH}_\varepsilon(\prefp)}[\swf{\Y}{\utp}]} \\
        &\leq \frac{\swfs{1}{\X}{\utp}}{\left(1-\ex\right) \swf{\X}{\utp}} + \frac{\swfs{2}{\oc}{\utp}}{\E_{\Y \sim f^\text{TH}_\varepsilon(\prefp)}[\swf{\Y}{\utp}]} \\
        &\leq \frac{6}{6-\varepsilon} + \frac{\swfs{2}{\oc}{\utp}}{\frac{\varepsilon \swfs{\su}{\Y}{\utp}}{6n H_m}\swf{\oc}{\utp}}\\
    & = \frac{6}{6-\varepsilon} + \frac{6n H_m}{\varepsilon \swf{\oc}{\utp}} &\left(\because \swfs{2}{\oc}{\utp} = \swfs{s}{\oc}{\utp}\right) \\
    & \leq \frac{12 n H_m}{\varepsilon \swf{\oc}{\utp}}.
\end{align*}

Finally, if we put it together with \Cref{eq:udist-sw}, and use the fact that $\min(a,b) \le \sqrt{a \cdot b}$, we have $$\udist(f^\text{TH}_\varepsilon(\prefp), \utp) \leq \sqrt{\frac{12 n H_m}{\varepsilon \swf{\oc}{\utp}} \times \frac{6mH_m\swf{\oc}{\utp}}{\varepsilon n}}.$$
Since this holds for any $\prefp$ and $\utp \ucns \prefp$, and $\varepsilon$ is a positive constant, we get that $\udist(f^\text{TH}_\varepsilon)  \in O(\sqrt {m} H_m).$ 
\end{proof}

\section{Partial Ordinal Preferences (Top-$\mathbf{t}$)}

In this section, we consider the problem of designing a voting rule with a good distortion in both worlds while we only have each agent's ranking for their top-$t$ preferences, where $t<m$.

\paragraph{Top-$t$ distortion.} Similar to the full ranking case, we can define the distortion of a voting rule $f$ on top-$t$ preference profile $\prefp_t$ as the worst-case distortion of $f(\prefp_t)$ on any metric (utility) profile that is consistent with $\prefp_t$. Note that since $\prefp_t$ contains less information than a full-ranking preference profile, a wider class of metrics (utility profiles) are considered in the definition of distortion.

\subsection{Deterministic Rules with Top-$t$ Preferences}
When it comes to deterministic voting rules with top-$t$ preferences, we can achieve $\Theta(m^2)$ utilitarian distortion by choosing the plurality winner, which is optimal. In the metric framework, the distortion of any deterministic voting rule is at least $2m/t$. \cite{kempe2020communication} and \citet{AFP22} propose a deterministic voting rule with $6m/t+1$ metric distortion. We first show that achieving the best of both worlds in this case is infeasible.

\begin{theorem}
    Any deterministic voting rule on top-$t$ preferences with bounded utilitarian distortion has metric distortion of $\Omega(m-t+1)$.\end{theorem}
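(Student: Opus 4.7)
The plan is to prove the lower bound in two steps: first establish that bounded utilitarian distortion forces the rule to output an alternative that is ranked first by at least one agent, then exhibit a preference profile and a consistent metric under which this constraint forces metric distortion $\Omega(m-t+1)$.

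For the first step, I fix a top-$t$ profile $\prefp_t$ and suppose for contradiction that $f(\prefp_t) = \X$ is no agent's top-1. I consider the unit-sum utility profile in which each agent $i$ assigns utility $1-(t-1)\delta$ to their top alternative and $\delta$ to each of their remaining top-$t$ alternatives (and $0$ outside top-$t$); this is consistent with $\prefp_t$ for any $\delta \in (0, 1/t)$. Then $\swf{\X}{\utp} \le n\delta$, whereas any agent's top-1 has social welfare at least $1-(t-1)\delta$, so the utilitarian distortion is at least $(1-(t-1)\delta)/(n\delta)$, which diverges as $\delta \to 0$. Hence $f(\prefp_t)$ must be ranked first by at least one agent.

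For the second step, I construct a profile on $n = m-t+1$ agents and $m$ alternatives partitioned into $\set{A_1, \ldots, A_{m-t+1}}$ and $\set{B_1, \ldots, B_{t-1}}$, where agent $i$'s top-$t$ ranking is $A_i \succ B_1 \succ \cdots \succ B_{t-1}$. By the first step, $f$ must output some $A_{k^*}$. I then exhibit a star metric: all $B_j$ live at the center; spoke $k^*$ carries agent $k^*$ and $A_{k^*}$ co-located at distance $1$ from the center; every other spoke $k$ carries agent $k$ and $A_k$ co-located at distance $\epsilon$ from the center. A direct check verifies that for each agent the closest $t$ alternatives in this metric are exactly their prescribed top-$t$ (ties among the $B_j$'s can be broken in favour of the profile's ordering). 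Computing social costs and letting $\epsilon \to 0$: $\scf{A_{k^*}}{d} \to m-t$, because $A_{k^*}$ sits at distance $1+\epsilon$ from each of the $m-t$ agents on short spokes, whereas $\scf{B_j}{d} \to 1$ and $\scf{A_k}{d} \to 1$ for each $k \neq k^*$. Thus the metric distortion on this instance is at least $m-t \in \Omega(m-t+1)$.

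The main obstacle is the metric design: a symmetric star with equal spoke lengths only yields a constant distortion of roughly $3$. The key idea is to break symmetry, which is legal because the adversary chooses the metric after seeing the rule's output $A_{k^*}$, so the adversary can lengthen only the single spoke that contains the rule's choice. This makes $A_{k^*}$ simultaneously far from the $m-t$ agents on the short spokes, while the shared alternatives $B_j$ remain within $\epsilon$ of those agents. The only nontrivial check is top-$t$ consistency for every agent under this asymmetric metric, which holds because every short spoke has length $\epsilon \ll 1$, so traversing the central hub is always cheaper than crossing to the long spoke.
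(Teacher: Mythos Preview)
Your proof is correct and follows essentially the same two-step strategy as the paper: first argue that bounded utilitarian distortion forces the rule to output some agent's top choice, then build a profile where every such choice can be made $\Omega(m-t+1)$-bad in a consistent metric. The only cosmetic differences are that the paper uses a general $n$ split evenly among the $m-t+1$ ``exclusive'' alternatives and a one-dimensional metric (with the chosen $\X$ at $1$, its supporters at $0.5$, and everything else at $0$), whereas you take $n=m-t+1$ and a star metric with one long spoke; both realizations give the same asymptotic ratio.
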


\begin{proof}
    Consider the preference profile $\prefp$ where $m-t+1$ of the alternatives each appear as the top choice of $\nicefrac{n}{(m-t+1)}$ agents, and the remaining $t-1$ alternatives appear with the same order in the second to the $t$\textsuperscript{th} position of all the agents. Let $\X$ be the selected candidates. If $X$ does not appear as the top choice of any agent, then in the utility profile where each agent has utility 1 for his top choice and zero for the rest we have unbounded distortion. Now let us consider the case where $X$ has plurality score greater than zero. In that case consider the 1 dimensional metric space $d$ where $X$ is located at point 1, all the agents who have her as their top choice are located at point $0.5$, and all remaining agents and alternatives are located at point zero. 
    You can see that $X$ has distance of at lest $0.5$ to each agent and hence $\scf{X}{d} > \nicefrac{n}{2}$ and for any $\Y \neq \X$, $\scf{Y}{d} = \nicefrac{n}{2(m-t+1)}$
    which means $X$ has $\Omega(m-t+1)$ metric distortion.
\end{proof}

We remark the contrast between this metric distortion lower bound of $\Omega(m-t+1)$ when bounded utilitarian distortion is imposed, and the optimal metric distortion of $\Theta(m/t)$ when we ignore utilitarian distortion~\cite{kempe2020communication}. This shows that requiring utilitarian distortion to be bounded makes metric distortion strictly worse whenever $t$ is neither $\Theta(1)$ nor $m-\Theta(1)$.

Next, we show that this lower bound is tight by proving a matching upper bound. 
\begin{theorem}
    A deterministic voting rule exists with metric distortion $O(m-t+1)$ and utilitarian distortion $O(m^2)$.
    \label{thm:top-t-det}
\end{theorem}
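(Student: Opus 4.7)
My plan is to adapt the Pruned Plurality Veto construction from Section~3 to the top-$t$ setting, with the pruning threshold chosen so that the utilitarian bound of $O(m^2)$ and the metric bound of $O(m-t+1)$ emerge simultaneously. Specifically, I would set $\tau = \frac{m-t}{m(m-t+1)}$ (so that $\tau m = 1 - \frac{1}{m-t+1} < 1$), define $\alts' = \{X \in \alts : \plu(X, \prefp_t) \ge \tau n\}$, which is nonempty by averaging, and return the output of a metric-aware deterministic rule applied to the profile restricted to $\alts'$.

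For the utilitarian distortion, every $X \in \alts'$ is the top choice of at least $\tau n$ agents, each of whom assigns $X$ at least $1/m$ utility by unit-sum utilities and the Pigeonhole principle, so $\swf{X}{\utp} \ge \tau n / m$. Dividing the trivial upper bound $n$ by this quantity gives $\udist \le m/\tau = \frac{m^2(m-t+1)}{m-t} = O(m^2)$ for all $t \le m-1$. For the metric distortion, \Cref{lem:subset-mdist} with this $\tau$ yields $1 - \tau m = \frac{1}{m-t+1}$, so some alternative in $\alts'$ has social cost within a factor of $1 + 2(m-t+1) = O(m-t+1)$ of the optimum. Combined with any selection rule on $\alts'$ that achieves $O(1)$ metric distortion relative to the best alternative in $\alts'$, this yields the target $O(m-t+1)$ metric distortion.

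The main obstacle is executing this $O(1)$-distortion selection on $\alts'$ from top-$t$ information alone, because top-$t$ preferences do not in general induce full rankings on $\alts'$: alternatives in $\alts'$ lying below an agent's top-$t$ are only known to sit beneath position $t$, not ordered among themselves. I would address this by running Plurality Veto on $\alts'$ after completing each agent's missing sub-top-$t$ comparisons arbitrarily, and verifying that the matching-based invariant underlying the $3$-distortion guarantee of Plurality Veto~\cite{GHS20,FD22} survives such completions, so that the output remains within an $O(1)$ factor of the best alternative in $\alts'$ under the true (unrevealed) metric. If this direct adaptation turns out to be too lossy, a natural backup is to strengthen \Cref{lem:subset-mdist} to accommodate the partial-information regime directly by folding the inner rule's distortion into the $1 + 2/(1-\tau m)$ factor, which would still give an $O(m-t+1)$ end-to-end bound since Kempe's $O(m/t)$ top-$t$ metric rule satisfies $m/t \le m-t+1$ for all $1 \le t \le m$.
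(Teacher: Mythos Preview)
Your utilitarian analysis and the application of \Cref{lem:subset-mdist} are correct: with your threshold $\tau=\frac{m-t}{m(m-t+1)}$, any alternative in $\alts'$ has utilitarian distortion $O(m^2)$, and the best alternative in $\alts'$ has metric distortion $O(m-t+1)$. The gap is the selection step inside $\alts'$. Plurality Veto's $3$-distortion guarantee relies on the input profile being consistent with the underlying metric; once you complete the sub-top-$t$ portion arbitrarily, the completed profile on $\alts'$ need not be consistent with $d$, and the matching-based invariant from~\cite{GHS20,FD22} can fail. Concretely, with your $\tau$ one can have $|\alts'|$ as large as $\Theta(m)$ while many agents have \emph{none} of their top-$t$ alternatives in $\alts'$ (their global top choice may have low plurality), so you do not even know these agents' top choice within $\alts'$, let alone their full ranking there. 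Your backup does not close this: running an $O(m/t)$ top-$t$ rule on $\alts'$ requires every agent to reveal a prefix of fixed length within $\alts'$, which is not guaranteed; and simply composing the global $O(m/t)$ bound with the $1+2/(1-\tau m)=O(m-t+1)$ factor from \Cref{lem:subset-mdist} yields $O((m/t)(m-t+1))$, not $O(m-t+1)$, while outputting Kempe's alternative directly loses the utilitarian guarantee because it need not lie in $\alts'$.

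The paper resolves exactly this issue by using a \emph{fixed} threshold $\tau=\frac{1}{2m}$ and case-splitting on $|\alts^+|$. If $|\alts^+|<2(m-t+1)$, pigeonhole gives an alternative with plurality at least $\frac{n}{4(m-t+1)}$, which directly yields $O(m-t+1)$ metric distortion (via the standard ``$2n/\plu-1$'' bound) and $O(m(m-t+1))=O(m^2)$ utilitarian distortion. If $|\alts^+|\ge 2(m-t+1)$, then every agent has at least $|\alts^+|-(m-t)\ge |\alts^+|/2$ of her global top-$t$ alternatives lying in $\alts^+$, so one can legitimately run the top-$t'$ rule of~\citet{AFP22} on $\alts^+$ with $t'=|\alts^+|-m+t$, obtaining $O(|\alts^+|/t')=O(1)$ distortion relative to the best in $\alts^+$; \Cref{lem:subset-mdist} with $\tau=\frac{1}{2m}$ then gives $O(1)$ overall metric distortion, and membership in $\alts^+$ gives the $O(m^2)$ utilitarian bound. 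The missing idea in your plan is precisely this case split, which guarantees that whenever you must select \emph{within} the pruned set, each agent reveals a constant fraction of her ranking over that set.
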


\begin{proof}
Consider any preference profile $\prefp$, utility profile $\utp \ucns \prefp$, and metric $d\mcns \prefp$. If $t \leq \nicefrac{m}{2}$, we can use the Plurality rule which has $O(m^2)$ utilitarian and $O(m) = O(m-t+1)$ metric distortion, giving us the desired bounds.

Suppose that $t > m/2$, and let $\alts^+$ be the set of alternatives that appear as the top choice of at least $n/2m$ agents. If $|\alts^+| < 2(m-t+1)$, there exists an alternative $\X$ that appears as the top choice of at least $\nicefrac{n}{4(m-t+1)}$ agents. This alternative at least gets a $1/m$ utility for each time she appears as the top choice of an agent, so her social welfare would at least be $\nicefrac{n}{(4m(m-t+1))}$, and since the maximum social welfare of any alternative is $n$ this gives us $O(m(m-t+1))$ utilitarian distortion. In addition, we know that an alternative that appears as the top choice of $n'$ agents at most has a metric distortion of $\frac{n-n'}{n}$ that means $\X$ at most has a metric distortion of $ \frac{n-\nicefrac{n}{(4m(m-t+1))}}{\nicefrac{n}{(4m(m-t+1))}} \in O(m-t+1) $.

In the other case, where $|\alts^+| \geq 2(m-t+1)$, $\alts^+$ includes at least $|\alts^+|-m+t$ of the top-$t$ alternatives of each agent's list. That means we can run a top-$|\alts^+|-m+t$ deterministic voting rule to select alternative $\X$ with $O(\frac{|\alts^+|}{|\alts^+|-m+t})$ metric distortion (by Theorem 4.5 of \cite{AFP22}). Note that this distortion is only compared to the best alternative in $\alts^+$ and not compared to the best alternative overall. Since each member of $\alts^+$ has a plurality score of at least $n/2m$, we can apply \Cref{lem:subset-mdist} with $\tau = 1/2m$ and show that the $\mdist(\X, d)\in O(\frac{|\alts^+|}{|\alts^+|-m+t}) = O(1)$. Finally, since $\X \in |\alts^+|$, she has at least $n/2m$ top votes and at least gets $1/m$ utility for each of them. That means her social welfare is $O(n/m^2)$, and since the maximum possible social welfare is $n$, this yields $O(m^2)$ utilitarian distortion.
\end{proof}

\begin{remark}
For each instance, if the mechanism described in the proof of \Cref{thm:top-t-det} outputs an alternative with $\md$ metric and $\ud$ utilitarian distortion, we have $\md \times \ud \in O\left(\max \left(m (m-t+1)^2, m^2\right)\right)$.
\end{remark}

\subsection{Randomized Rules with Top-$t$ Preferences}
In the randomized case, we know that we can achieve $O(\max(\sqrt m, m/t))$ utilitarian and $3-2/n$ metric distortion. The following theorem shows that for $t < \sqrt m$, we cannot simultaneously achieve the (asymptotically) best possible distortion in both worlds.

\begin{theorem}
Any (randomized) voting rule with access only to the top-$t$ prefix of each agent's ranking that achieves a metric distortion of $\md$ will have a utilitarian distortion of $\Omega(\frac{m\sqrt m}{\md t \sqrt t})$.
\end{theorem}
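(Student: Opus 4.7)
The plan is to construct a symmetric top-$t$ preference profile together with an adversarial family of consistent utility profiles and metrics, and then exploit an averaging argument (in the spirit of Yao's minimax) to derive the lower bound. I would take the balanced profile: partition the $m$ alternatives into $g = m/t$ groups $G_1,\ldots,G_g$ of size $t$, partition the $n$ agents into $g$ groups $A_1,\ldots,A_g$ of size $n/g$, and have every agent in $A_j$ submit a top-$t$ listing the members of $G_j$ in some order; within $A_j$, the ordering is varied so that each alternative of $G_j$ is the top-$1$ of a sub-group of $n/m$ agents. Every alternative thus has top-$1$ support $n/m$ and top-$t$ support $nt/m$.

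First I would extract a metric-based pointwise constraint on the rule's output distribution $p$. By a ``far-placement'' adversary (place $X$ and its top-$1$ supporters at one location and place every other agent close to a competing alternative), any rule $f$ with metric distortion at most $\md$ must satisfy $p(X) \leq O(\md \cdot \sigma(X)/n)$ for every alternative $X$ with top-$1$ support $\sigma(X)$, and $p(X) = 0$ whenever $X$ is never anyone's top-$1$. In our balanced instance this gives $p(X) \leq O(\md/m)$ per alternative. A set-wise strengthening of the same adversary -- a single metric that simultaneously pushes a whole set $S$ of alternatives far when their top-$1$ supporters lie within a common pool $U(S)$ of agents -- gives $\sum_{X \in S} p(X) \leq O(\md \lvert U(S) \rvert /n)$; this set-wise bound is the crucial lever for forcing the rule to spread its probability beyond the obvious $m/t$ groups.

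Next I would construct the utility adversary. For each choice of a ``hidden best'' alternative $X^*$, build a consistent unit-sum utility profile where $u_i(X^*) = 1/(t+1)$ for agents $i$ whose top-$t$ does not contain $X^*$ (the tight bound allowed by $X^* \notin$ top-$t_i$ together with unit-sum), $u_i(X^*) \leq 1/r_i$ for agents in $X^*$'s group (with $r_i$ its rank in their top-$t$), and the remaining unit mass distributed among the declared top-$t$ alternatives respecting consistency. A direct computation yields $\sw(X^*) = \Omega(n/t)$ while every other alternative $Y$ has $\sw(Y) = O(n/m)$. I would then identify a symmetric family $\mathcal{H}$ of such hidden-best candidates, of size $\lvert \mathcal{H} \rvert = \Omega((m/t)^{1.5})$, whose members are indistinguishable from one another under top-$t$ information; by averaging the rule's probability over $X^* \in \mathcal{H}$ one gets $\min_{X^* \in \mathcal{H}} p(X^*) \leq O(1/\lvert \mathcal{H} \rvert)$. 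Combining this with the metric bound $p(X^*) \leq O(\md/m)$ and $\sw(Y) = O(n/m)$ for $Y \neq X^*$, the rule's expected welfare against the adversarially-chosen $X^*$ is at most $O(p(X^*) \cdot n/t + n/m)$, and dividing $\sw(X^*) = \Omega(n/t)$ by this upper bound gives $\udist(f,\prefp) \geq \Omega(m\sqrt{m}/(\md\, t \sqrt{t}))$.

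The main obstacle is constructing the size-$\Omega((m/t)^{1.5})$ hidden-best family $\mathcal{H}$ so that all its members are indistinguishable under the top-$t$ profile, which is essential for the averaging step. The na\"ive hiding only achieves $\Omega(m/t)$ symmetric candidates (one per group) and recovers only the classical BHLS22 bound of $\Omega(m/t)$; the further $\sqrt{m/t}$ amplification must come from a careful interplay between the set-wise metric bound and the within-group ordering freedom of top-$t$ preferences, together with a symmetrization over the within-group top-$1$ assignments. Establishing this amplification and matching it tightly against the probability-spreading constraint is the technical heart of the argument.
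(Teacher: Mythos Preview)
Your proposal has a genuine gap that cannot be patched within the balanced-profile framework. Look at your own final expression for the rule's expected welfare: $O(p(X^*)\cdot n/t + n/m)$. No matter how small you drive $p(X^*)$ (via averaging over $\mathcal{H}$ or via the metric constraint), the additive $n/m$ term survives, so dividing $\sw(X^*)=\Omega(n/t)$ by it yields at most $O(m/t)$---precisely the BHLS22 bound you already flagged as insufficient. The arithmetic in your last sentence does not give $\Omega(m^{1.5}/(\md\, t^{1.5}))$.

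The $n/m$ term is not an artifact of loose bookkeeping; it is forced by the balanced construction. To get $\sw(X^*)=\Omega(n/t)$ you need $\Omega(n)$ agents to assign utility $\Omega(1/t)$ to $X^*$, and since $X^*$ sits at position $\ge t+1$ for them, consistency forces utility $\ge \Omega(1/t)$ on each of their top-$t$ alternatives. In your balanced profile every alternative lies in the top-$t$ of $nt/m$ such agents, hence every alternative has welfare $\Omega(n/m)$, and the expected welfare of \emph{any} distribution is $\Omega(n/m)$. Enlarging $\mathcal{H}$ to $(m/t)^{1.5}$, even if you could do it, only shrinks $p(X^*)$; it does nothing about this floor. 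So the ``technical heart'' you identify is not merely unfinished---it is aiming at the wrong lever.

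The paper breaks this barrier with an \emph{unbalanced} construction. It introduces a third set $\alts^-$ of alternatives that appear in no one's top-$t$; the hidden optimum $X$ lives there. Only a carefully chosen \emph{small} fraction (roughly $\sqrt{t/m}/\md$) of the agents---partitioned into tiny blocks $\ags_k$ whose top-$t$ sets are the disjoint $\alts_k$---place $X$ at rank $t+1$ with utility $1/(t+1)$. Because these blocks are small, the metric constraint forces $p_{k^*}\le O((t/m)^{1.5})$ on the heaviest block, so the $\alts_k$ contribute little expected welfare; and the remaining majority of agents spread utility $\approx 1/m$ uniformly, so the $\alts^+$ alternatives each have welfare only $O(nt/m^2)$. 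This simultaneously drops both terms of the expected welfare to $O(nt/m^2)$ while keeping $\sw(X)=\Omega(n/(\md\sqrt{mt}))$, and the ratio gives the claimed $\Omega(m^{1.5}/(\md\, t^{1.5}))$. The key ideas you are missing are (i) the phantom set $\alts^-$ that hosts the optimum without inflating anyone's top-$t$ utilities, and (ii) the asymmetric sizing of the $\ags_k$ blocks, tuned to $\md$, which is what couples the metric and utilitarian sides.
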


\begin{proof}
    We will construct a top-$t$ preference profile $\prefp$ with $n$ agents and $m$ alternatives. Assume that $n$ is divisible by $\frac{\md m \sqrt m}{t \sqrt t}$ and $m$ is divisible by $3t$. We partition the alternatives into $\frac{m}{3t} + 2$ sets: let $\alts^+$ and $\alts^-$ each include $m/3$ alternatives and, for $k \in [\frac{m}{3t}]$, let $\alts_k$ include $t$ alternatives. Choose any $X \in \alts^-$. 
    
    In $\prefp$, for $k \in [\frac{m}{3t}]$, let $\ags_k$ be a set of $\frac{n t \sqrt t}{\md m \sqrt m}$ agents who rank alternatives in $\alts_k$ as their top-$t$ choices (in any order). The remaining agents rank alternatives in $\alts^+$ as their top-$t$ choices in such a way that each of $m/3$ alternatives in $\alts^+$ appears in the top-$t$ choices of a $\nicefrac{t}{(m/3)} = 3t/m$ fraction of these agents.    Consider any voting rule $f$ with a metric distortion at most $\md$ on this preference profile, where $\md$ is finite. We want to prove that its utilitarian distortion must be $\Omega(\frac{m\sqrt m}{\md t \sqrt t})$. 
    
    First, it is easy to check that $f(\prefp)$ cannot assign any positive probability to any alternative in $\alts^-$. This is because it is possible that all agents rank the alternatives in $\alts^-$ at the bottom of their preference rankings, so $\prefp$ is consistent with distance metrics in which every alternative in $\alts^-$ has an arbitrarily large social cost. Thus, assigning any positive probability to any alternative in $\alts^-$ would result in unbounded metric distortion. 
    
    Next, define $p_k$ to be the sum of the probabilities assigned to the alternatives in $\alts_k$ under $f(\prefp)$, and let $k^* \in \argmax_{k \in [\frac{m}{3t}]} p_k$. 
        Now, consider the simple one-dimensional metric $d$ over $\bbR$ in which alternatives in $\alts_{k^*}$ are at $0$, agents in $\ags_{k^*}$ are at $1$, and all other agents and alternatives are at $2$. Hence, the distances are as follows for all agents $i$ and alternatives $\Y$:
    $$
    d(i, \Y) = \begin{cases}
        1 & i \in \ags_{k^*} \\
        2 & i \notin \ags_{k^*} \text{ and } \Y \in \alts_{k^*} \\
        0 & i \notin \ags_{k^*} \text{ and } \Y \notin \alts_{k^*}.
    \end{cases}.
    $$

    For the metric distortion of $f$, we have:
    \begin{align*}
        \mdist(f(\prefp), d) &= \frac{\E_{\Y \sim f(\prefp)}[\scf{\Y}{d}]}{\min_{\Y \in \alts} \scf{\Y}{d}} \\
        & \geq \frac{p_{k^*} (2n-\frac{n t \sqrt t}{\md m \sqrt m}) + (1-p_{k^*}) \min_{\Y \in \alts} \scf{\Y}{d}}{\min_{\Y \in \alts} \scf{\Y}{d}} \\
        &= 1-p_{k^*} + \frac{p_{k^*} (2n-\frac{n t \sqrt t}{\md m \sqrt m})}{\frac{n t \sqrt t}{\md m \sqrt m}} \\
        &\geq 1-p_{k^*} + \frac{p_{k^*} (2-\frac{t \sqrt t}{\md m \sqrt m})}{\frac{t \sqrt t}{\md m \sqrt m}} & (\varepsilon \to 0) \\
        = 1-p_{k^*} + p_{k^*} \cdot \left(\frac{2\md m \sqrt m}{t \sqrt t}-1\right)\\
        &= 1+2p_{k^*} \cdot \left(\frac{\md m \sqrt m}{t\sqrt t}-1 \right) \geq 1+2p_{k^*} \cdot \frac{(\md -1 ) m \sqrt m}{t\sqrt t}.
    \end{align*}
   Since $f$ has metric distortion at most $\md$, we know that  $\mdist(f(\prefp), d) \leq \md$, which means
    \begin{equation}
        2p_{k^*} \cdot \frac{(\md-1) m \sqrt m}{t\sqrt t} \leq \md - 1 \implies p_{k^*} \leq  \frac{t \sqrt t}{2 m \sqrt m}.
        \label{eq:pk}
    \end{equation}
    
    We use this to derive a lower bound on the utilitarian distortion of $f$. Suppose that in the underlying full preference profile, every agent has alternatives in $\alts^-$ at ranks $t+1$ through $t+m/3$, with $\X$ appearing at rank $t+1$. Consider the consistent utility profile $\utp$ in which agents who have an alternative in $\alts^+$ as their top choice have a utility of $\frac{1}{t+m/3}$ for their top $t+m/3$ alternatives and the remaining agents have a utility of $\frac{1}{t+1}$ for their top $t+1$ alternatives.

    Under this utility profile, we have the following:
    \begin{align*}
    \swf{\X}{\utp} &\ge \frac{1}{t+1} \cdot \frac{n\sqrt t}{3\md \sqrt m} \ge \frac{n}{6 \md \sqrt{mt}},&&\\
    \swf{\Y}{\utp} &\le \frac{n \cdot 3t/m}{t+m/3} \leq \frac{9nt}{m^2}, &&\forall \Y \in \alts^+,\\
    \swf{\Y}{\utp} &= \frac{1}{t+1} \cdot \frac{nt\sqrt t}{\md m \sqrt m} \le \frac{n\sqrt{t}}{\md m\sqrt{m}}, &&\forall \Y \in \alts_k, k \in [m/(3t)].
    \end{align*}
    Additionally, since $p_k \leq p_{k^*}$ for $k \in [m/(3t)]$, we have 

    \begin{equation*}
      \mdist(f(\prefp), \utp) \ge \frac{\swf{\X}{\utp}}{\E_{\Y \sim f(\prefp)}[\swf{\Y}{\utp}]} \geq \frac{\frac{n}{6 \md \sqrt{mt}}}{p_{k^*} \cdot \frac{m}{3t} \cdot \frac{n\sqrt{t}}{\md m\sqrt{m}} + \frac{9nt}{m^2} } 
       \geq \frac{\frac{n}{6 \md \sqrt{mt}}}{\frac{nt}{6 \md m^2} + \frac{9nt}{m^2} }  \in \Omega\left(\frac{m \sqrt m}{\md t \sqrt t} \right),
    \end{equation*}
where the last inequality is implied by  \Cref{eq:pk}.
\end{proof}

Recall that Random Dictatorship (which only requires access to plurality votes) achieves metric distortion less than $3$~\cite{AP17}, so the optimal metric distortion of randomized rules given top-$t$ preferences is $O(1)$. The following corollary shows that any rule achieving this optimal metric distortion bound must have utilitarian distortion $\Omega(\max((m/t)^{1.5},\sqrt{m}))$, which is strictly worse than the optimal bound of $O(\max(m/t,\sqrt{m}))$~\cite{BHLS22} when $t = o(m^{2/3})$. 

\begin{corollary}
    Any voting rule with constant metric distortion on top-$t$ preference profiles has utilitarian distortion of $\Omega\left(\max\left(\frac{m\sqrt m}{t\sqrt t},\sqrt{m}\right) \right)$.
\end{corollary}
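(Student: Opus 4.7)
The plan is to derive the two halves of the $\max$ separately and combine them.

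First I would invoke the theorem that immediately precedes the corollary: any randomized rule on top-$t$ preferences with metric distortion $\md$ has utilitarian distortion $\Omega(m\sqrt{m}/(\md\, t\sqrt{t}))$. Substituting the hypothesis $\md = O(1)$ yields a utilitarian distortion lower bound of $\Omega(m\sqrt{m}/(t\sqrt{t}))$, which accounts for the first term inside the $\max$. This step is essentially a direct specialization and requires no new construction.

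Next I would handle the $\Omega(\sqrt{m})$ term. The key observation is that a top-$t$ preference profile conveys strictly less information than a full ranking, so any lower bound on utilitarian distortion that holds against randomized rules using full rankings must also hold against randomized rules using top-$t$ preferences (for any $t < m$). Since \citet{BCHL+15} established a matching $\Omega(\sqrt{m})$ lower bound on the utilitarian distortion of any randomized voting rule using full rankings, this bound transfers verbatim to our setting. In particular, the metric distortion hypothesis plays no role here: the bound applies to every randomized rule irrespective of its metric guarantee.

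Combining the two lower bounds, the utilitarian distortion of any randomized rule with constant metric distortion on top-$t$ preferences is $\Omega(\max(m\sqrt{m}/(t\sqrt{t}),\sqrt{m}))$, which is exactly the claim. There is no real obstacle here since both ingredients are already established; the only minor subtlety is to note that the $\Omega(\sqrt{m})$ bound from the full-ranking utilitarian setting is inherited by the top-$t$ setting, which follows because any rule operating on top-$t$ preferences can be simulated by a rule operating on full rankings that simply discards ranks beyond position $t$.
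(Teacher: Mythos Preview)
Your proposal is correct and matches the paper's (implicit) approach: the paper states the corollary without proof, treating the first term as an immediate specialization of the preceding theorem with $\md = O(1)$ and the $\Omega(\sqrt{m})$ term as the known lower bound on utilitarian distortion of randomized rules, which holds irrespective of any metric hypothesis. Your simulation remark---that any top-$t$ rule is a particular full-ranking rule---is the right way to justify inheriting the $\Omega(\sqrt{m})$ bound of \citet{BCHL+15}.
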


\begin{definition}[\TRandMech\ voting rule]
The \TRandMech\ voting rule, $f^\text{TTH}$ uses the following steps:
\begin{itemize}
    \item Let $\hat \X$ be the alternative that would be chosen by the mechanism proposed by \citet{AFP22}, which has $\nicefrac{6m}{t}+1$ metric distortion.
    \item Select an agent $i\in\ags$ uniformly at random. 
    \item  Return an alternative $\Y \in \alts$ with probability:
\[\p(i, \Y) = \begin{cases}
\frac{1}{2H_tr_i(\Y)} & \text{if } \Y \succ_i \hat\X,\\
1-\sum_{\Y \succ_i \hat\X} \p(i, \Y) & \text{if } \Y = \hat\X, \\
0 & \text{otherwise.}
\end{cases}
\] 
\end{itemize}
\label{def:toptrandrule}
\end{definition}

\begin{theorem}
    The \TRandMech\ rule has metric distortion of $O\left(\frac{m}{t}\right)$.
\end{theorem}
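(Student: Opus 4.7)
The plan is to invoke \Cref{thm:metric} with $\md = \frac{6m}{t}+1$ (the metric distortion of $\hat \X$ guaranteed by the rule of \citet{AFP22}) and an appropriately chosen $\hat \X$-truncated weight function that mirrors the randomization used in $f^{\text{TTH}}$. Specifically, for each agent $i \in \ags$, I would define $\wf(i, \Y) = \frac{1}{H_t\, r_i(\Y)}$ for every alternative $\Y$ in $i$'s top-$t$ list with $\Y \succ_i \hat \X$, set $\wf(i, \hat \X) = 1 - \sum_{\Y} \wf(i, \Y)$ over those same alternatives, and $\wf(i, \Y) = 0$ otherwise.

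The first step is to verify that $\wf$ is a valid $\hat \X$-TWF, which requires a short case split depending on whether $\hat \X$ lies in agent $i$'s top-$t$. If $\hat \X$ sits at position $k \leq t$, only alternatives at positions $1, \ldots, k-1$ receive nonzero weight, contributing $\sum_{j=1}^{k-1} 1/(H_t j) \leq 1$ and leaving $\wf(i, \hat \X) \geq 0$. If $\hat \X$ is ranked below position $t$, then all $t$ of $i$'s top-$t$ alternatives satisfy $\Y \succ_i \hat \X$ and their weights already sum to exactly $1$; in this case $\wf(i, \hat \X) = 0$ and no weight needs to be assigned to the unseen alternatives lying between position $t+1$ and $r_i(\hat \X)-1$, so the normalization and truncation conditions both hold. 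This bookkeeping is the most delicate step, because the rule only has access to top-$t$ information while $\wf$ must remain a valid $\hat \X$-TWF with respect to the (full) preference profile $\prefp$ consistent with the underlying metric $d$.

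With this in hand, the key observation is that by construction, $\p(i, \Y) = \wf(i, \Y)/2$ for every $\Y \neq \hat \X$. Using this together with $d(\hat \X, \hat \X) = 0$ yields
\[
\sum_{\Y \in \alts} p(\Y)\, d(\hat \X, \Y) \;=\; \frac{1}{n}\sum_{\Y \neq \hat \X}\sum_{i \in \ags} \p(i, \Y)\, d(\hat \X, \Y) \;=\; \frac{1}{2n}\sum_{\Y \in \alts} \wf^+(\Y)\, d(\hat \X, \Y),
\]
which is precisely what is needed to cancel the factor of $2n$ in the numerator of the bound from \Cref{thm:metric}. Substituting gives $\mdist(f^{\text{TTH}}(\prefp), d) \leq \md\cdot(1+1) = 2\md = O(m/t)$, as desired. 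The degenerate case $\sum_{\Y}\wf^+(\Y)\, d(\hat \X, \Y)=0$ can only occur when every agent ranks $\hat \X$ first, in which case the rule deterministically returns $\hat \X$ and the metric distortion is at most $\md$ directly. Once the TWF verification is complete, the rest of the argument follows the same template used for the metric analysis of \RandMech.
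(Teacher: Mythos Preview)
Your proof is correct and follows the same template as the paper: invoke \Cref{thm:metric} with the metric-distortion guarantee $\md=6m/t+1$ of $\hat\X$ and an $\hat\X$-truncated weight function proportional to the rule's per-agent probabilities, then cancel the numerator against the denominator. The paper simply takes $\wfx(i,\Y)=\p(i,\Y)$ (so the ratio in \Cref{thm:metric} is $2$, giving $3\md\le 21m/t$), whereas your choice $\wf(i,\Y)=2\p(i,\Y)$ for $\Y\neq\hat\X$ makes the ratio $1$ and yields the slightly sharper $2\md$; your explicit case split verifying TWF validity and your treatment of the degenerate denominator are also more careful than the paper's version.
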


\begin{proof}
First, let us define $\hat \X$-truncated weight function $\wfx$ as $\wfx(i, \Y) = p(i, \Y)$.

Note that for \emph{any} metric distance $d$ the alternative $\hat\X$ used by the \TRandMech\ voting rule to define its truncated weight function has metric distortion at most $7m/t$. Therefore, \Cref{thm:metric} implies that the distortion of the alternative chosen by this voting rule for any distance $d$, inducing preference profile $\prefp$, and using $\wfx$ as the truncated weight function, is:
        \begin{equation*}
    \mdist(f^\text{TTH}_\varepsilon(\prefp),d)
     \leq \frac{7m}{t} + \frac{14mn\sum_{\Y \in \alts} p(\Y)  d(\hat \X, \Y)}{t \sum_{\Y \in \alts}  \wfxp(\Y)d( \hat \X, \Y)},
     \end{equation*}
     where $p(\Y)$ is the probability that the \TRandMech\ rule returns alternative $\Y$. Note that since we choose each agent $i$ uniformly at random, this probability is $p(\Y)=\frac{1}{n}\sum_i \p(i, \Y) = \frac{1}{n}\sum_i \wfx(i, \Y)$, which yields:
    \begin{align*}
    \mdist(f^\text{TTH}_\varepsilon(\prefp),d) & \leq \frac{7m}{t} + \frac{14mn\sum_{\Y \in \alts} \frac{\wfxp(\Y)}{n}  d(\hat \X, \Y)}{t \sum_{\Y \in \alts}  \wfxp(\Y)d( \hat \X, \Y)} \leq \frac{21m}{t} \in O\left(\frac{m}{t}\right).\qedhere
\end{align*}
\end{proof}

\begin{theorem}
    The \TRandMech\ rule has utilitarian distortion of $O\left(\frac{m \sqrt m H_t}{t} \right)$.
\end{theorem}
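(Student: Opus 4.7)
The plan is to adapt the proof of the full-rankings Truncated Harmonic utilitarian-distortion bound, replacing $H_m$ by $H_t$ but carefully accounting for the fact that the rule places positive probability only on $\hat X$ and on alternatives appearing in the top-$t$ of the randomly selected agent. The starting point will be the identity $\E[\swf{Y}{\utp}] \geq \tfrac{1}{2}\swf{\hat X}{\utp} + \sum_{Y \neq \hat X} p(Y)\swf{Y}{\utp}$, using $p(\hat X) \geq 1/2$. Defining a top-$t$-restricted analog $\mathsf{SW}^s(Y) = \sum_{i:\, Y \in \text{top-}t(i),\, Y \succ_i \hat X} u_i(Y)$, the harmonic weighting together with the unit-sum bound $u_i(Y) \leq 1/r_i(Y)$ gives $p(Y) \geq \mathsf{SW}^s(Y)/(2nH_t)$. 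Applying $\swf{Y}{\utp} \geq \mathsf{SW}^s(Y)$ and Cauchy--Schwarz then yields $\sum_{Y \neq \hat X} p(Y)\swf{Y}{\utp} \geq S^2/(2nmH_t)$, where $S = \sum_{Y \neq \hat X} \mathsf{SW}^s(Y)$.

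Next I would upper bound $\swf{X^*}{\utp}$ by partitioning agents into $\ags_1 = \{i: \hat X \succ_i X^*\}$ and $\ags_2 = \ags \setminus \ags_1$, further splitting $\ags_2$ by whether $X^*$ sits in their top-$t$. The $\ags_1$ contribution is at most $\swf{\hat X}{\utp}$ since $u_i(X^*) \leq u_i(\hat X)$; the in-top-$t$ part of $\ags_2$ contributes exactly $\mathsf{SW}^s(X^*) \leq S$; and the out-of-top-$t$ part contributes at most $n/(t+1)$ in aggregate, since $u_i(X^*) \leq 1/(t+1)$ whenever $r_i(X^*) > t$. This gives $\swf{X^*}{\utp} \leq \swf{\hat X}{\utp} + S + n/(t+1)$. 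In parallel, I would record the auxiliary bound $\swf{\hat X}{\utp} \geq n/m - O(S/t)$: unit-sum gives $u_i(\hat X) \geq (1-U_i^{\text{above}})/m$, and for agents with $\hat X$ outside the top-$t$, each rank-$j$ utility with $j > t$ satisfies $u_i(j) \leq T_i/t$, so the gap between the true ``above-$\hat X$'' mass and its top-$t$ trace is at most $O(m/t) \cdot S$ in total.

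With these in hand, I would combine the three lower bounds $\E \geq \swf{\hat X}{\utp}/2$, $\E \geq S^2/(2nmH_t)$, and $\E \geq p(X^*)\swf{X^*}{\utp} \geq \mathsf{SW}^s(X^*)\swf{X^*}{\utp}/(2nH_t)$ with the decomposition $\swf{X^*}{\utp} \leq \swf{\hat X}{\utp} + S + n/(t+1)$ via case analysis on which of the three contributions to $\swf{X^*}{\utp}$ dominates. When the $\swf{\hat X}{\utp}$ term dominates, the first $\E$ bound gives $\udist \leq O(1)$; when the $\mathsf{SW}^s(X^*)$ term dominates, the third $\E$ bound combined with a geometric-mean step (mirroring the full-rankings proof) gives the analog of the $O(\sqrt m H_t)$ bound; and when the residual $n/(t+1)$ term dominates, $\swf{X^*}{\utp}$ is itself $O(n/t)$, and the Cauchy--Schwarz bound $\E \geq S^2/(2nmH_t)$ combined with the auxiliary lower bound on $\swf{\hat X}{\utp}$ (which forces $S \geq \Omega(nt/m)$ whenever $\swf{\hat X}{\utp}$ is small) completes the argument.

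The main obstacle will be the last case: when $X^*$ lies outside the top-$t$ of most agents in $\ags_2$, it receives zero probability from those agents, so the only tool is the Cauchy--Schwarz $S^2$ bound together with the auxiliary $\swf{\hat X}{\utp} \geq n/m - O(S/t)$. Balancing the $n/(t+1)$ slack against these indirect bounds is precisely what introduces the extra $m/t$ factor over the full-rankings bound $O(\sqrt m H_m)$ and produces the stated $O(m\sqrt m H_t/t)$ guarantee.
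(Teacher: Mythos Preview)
Your overall architecture---decompose $\swf{\oc}{\utp}$ according to where $\hat\X$ and $\oc$ sit in each agent's ranking, then match each piece against a suitable lower bound on $\E_{\Y\sim f^{\text{TTH}}(\prefp)}[\swf{\Y}{\utp}]$---is the right one, and Cases~1 and~2 go through essentially as in the full-rankings proof. The genuine gap is in Case~3.

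Your third term in the decomposition is $n/(t+1)$, obtained from the pointwise bound $u_i(\oc)\le 1/(t+1)$ for agents with $\oc$ outside their top-$t$. But this bound is \emph{decoupled} from every lower bound you have on $\E$: your only absolute bound on $\E$ (from $\swf{\hat\X}{\utp}\ge n/m-O(S/t)$ together with $\E\ge S^2/(2nmH_t)$) is $\E=\Omega(nt^2/(m^3H_t))$. When $n/(t+1)$ dominates, this yields
\[
\udist \;\le\; \frac{O(n/t)}{\Omega(nt^2/(m^3H_t))} \;=\; O\!\left(\frac{m^3H_t}{t^3}\right),
\]
which exceeds the target $O(m^{1.5}H_t/t)$ whenever $t=o(m^{3/4})$ (for $t=1$ it gives $O(m^3)$ instead of $O(m^{1.5})$). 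Your auxiliary bound forces $S=\Omega(nt/m)$, but that is exactly the value that makes the Cauchy--Schwarz bound as weak as the computation above.

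The missing idea is to tie the Case-3 contribution back to $\E$. For an agent $i$ with $\oc\succ_i\hat\X$ and $\oc\notin T_i$, necessarily $\hat\X\notin T_i$ as well, so \emph{every} $Y\in T_i$ satisfies $Y\succ_i\hat\X$ and hence receives harmonic weight from~$i$. Writing $P_3$ for the total $u_i(\oc)$ over such agents, one has $u_i(\oc)\le (\sum_{Y\in T_i}u_i(Y))/t$, and summing gives $tP_3\le \sum_{i}\sum_{Y\in T_i}u_i(Y)$, where the right side feeds directly into $\E$ via $p(Y)\ge \tfrac{1}{2nH_t}\sum_{i:Y\in T_i,\,Y\succ_i\hat\X}u_i(Y)$. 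Cauchy--Schwarz then yields $\E\ge t^2P_3^2/(2nmH_t)$, i.e.\ a bound that \emph{scales with $P_3$} rather than being absolute. Combining this with a second absolute bound $\E\ge n/(2m^2H_t)$ (from plurality: $p(Y)\ge \plu(Y)/(2nH_t)$ and $\swf{Y}{\utp}\ge\plu(Y)/m$) via the geometric-mean step gives the desired $O(m^{1.5}H_t/t)$ in the $P_3$-dominated case. Replacing your $n/(t+1)$ by $P_3$ and adding these two ingredients closes the gap.
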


\begin{proof}
Consider any preference profile $\prefp$ and utility profile $\utp \ucns \prefp$. Let $p$ be the distribution returned by the Top-$t$ Truncated Harmonic rule on $\prefp$. Let $\oc \in \max_{\X \in \alts} \swf{\X}{\utp}$ be an optimal alternative under $\utp$. Define $T_i$ to be the set of top-$t$ alternatives of agent $i$. Let $\Su_1 = \sum_{i \in \ags} \sum_{\Y \in T_i : \Y \succ_i \hat \X } u_i(\Y)$
be the sum of the utilities of the agents for the alternatives in their top-$t$ choices that they prefer to $\hat \X$, and $\Su_2$ be the sum of the utilities of the agents for all of their top-$t$ choices.

In addition, for alternative $\Y$, let $\swfs{\su}{\Y}{\utp}$ be her total utility from the agents who rank her among their top-$t$ choices and above $\hat \X$, i.e., $  \swfs{\su}{\Y}{\utp} = \sum_{i \in \ags : \Y \succ_i \hat \X} u_i(\Y)$.
    
    Note that $p(\Y) \geq \frac{\swfs{\su}{\Y}{\utp}}{2n H_t}$ for each $\Y \in \alts \setminus \set{\hat{\X}}$ and $p(\hat \X) \ge 1/2$, so
\begin{align*}
    \E_{\Y \sim f^\text{TTH}(\prefp)}[\swf{\Y}{\utp}] &= \frac{1}{2} \swf{\hat \X}{\utp} + \sum_{\Y \neq \hat \X} p(\Y)\swf{\Y}{\utp} \\
    &\geq \frac{1}{2} \swf{\hat \X}{\utp} + \sum_{\Y \neq \hat \X} \frac{\swfs{\su}{\Y}{\utp}}{2n H_t} \cdot \swfs{\su}{\Y}{\utp} \\
    &\geq \frac{1}{2} \swf{\hat \X}{\utp} + \frac{\left(\sum_{\Y \neq \hat \X} \swfs{\su}{\Y}{\utp}\right)^2}{2n m H_t} & (\because \text{AM-QM inequality})\\
     &\geq \frac{\Su_2-\Su_1}{2m}  + \frac{\Su_1^2}{2n m H_t} \geq \frac{\Su_2^2}{4n m H_t} \geq \frac{nt^2}{4m^3 H_t}, & (\because  \Su_2 \geq nt/m)
\end{align*}
where the fifth transition holds because $\frac{\Su_2-\Su_1}{2m}+\frac{\Su_1^2}{2nmH_t} \ge \frac{\Su_2-\Su_1}{2m}+\frac{\Su_1^2}{4nmH_t} \ge \frac{\Su_2^2}{4nmH_t}$ (the last transition is equivalent to $\Su_1+\Su_2 \le 2nH_t$, which is true because $\Su_1+\Su_2 \le 2n$ trivially). Next, let $\plu(\Y)$ be the number of agents whose top choice is $\Y$. We have
\begin{align*}
    \E_{\Y \sim f^\text{TTH}(\prefp)}[\swf{\Y}{\utp}] &\geq \sum_{\Y \in \alts} \frac{\plu(\Y)}{2nH_t} \cdot \swf{\Y}{\utp} \geq \frac{1}{2nH_t} \sum_{\Y \in \alts} \frac{\plu(\Y)^2}{m} \\
    &\geq \frac{1}{2nH_t} \left(\sum_{\Y \in \alts}\frac{\plu(\Y)}{m}\right)^2
    \ge \frac{n}{2m^2H_t}.
\end{align*}
That means
\begin{equation}
    \udist(f^\text{TTH}(\prefp), \utp) \leq \frac{\swf{\oc}{\utp}}{\max\left(\frac{nt^2}{4 m^3 H_t}, \frac{n}{2m^2H_t}\right)} = \min\left(\frac{4m^3 H_t \swf{\oc}{\utp}}{nt^2}, \frac{2m^2 H_t \swf{\oc}{\utp}}{n}\right).
    \label{eq:udist-sw-t}
\end{equation}

Now, let $\ags_1$ be the set of agents that have $\oc$ in their top-$t$ choices and prefer $\oc$ to $\hat \X$, $\ags_2$ be the set of agents that have $\hat \X$ in their top-$t$ choices and prefer $\hat \X$ to $\oc$, and $\ags_3$ be the remaining agents. Also, for $k \in \set{1,2,3}$, let $P_k$ be the sum of utilities of agents in $\ags_k$ for $\oc$. Note that $\swf{\oc}{\utp} = P_1+P_2+P_3$. 

If $P_3 \geq P_1 + P_2$ we have $\swf{\oc}{\utp} \le 2P_3$. On the other hand,
\begin{align*}
\E_{\Y \sim f^\text{TTH}(\prefp)}[\swf{\Y}{\utp}] &\geq \sum_{\Y \in \alts} p(\Y) \cdot \sum_{i \in \ags_3 : \Y \in T_i} u_i(\Y) \geq  \sum_{\Y \in \alts}\frac{ \left( \sum_{i \in \ags_3, \Y \in T_i} u_i(\Y)\right)^2}{2nH_t}\\
&\geq \frac{ \left( \sum_{\Y \in \alts}\sum_{i \in \ags_3, \Y \in T_i} u_i(\Y)\right)^2}{2nmH_t} \geq \frac{t^2 P_3^2}{2n m H_t} \ge \frac{t^2 \swf{\oc}{\utp}^2}{8n m H_t},
\end{align*}
where the third transition is due to the AM-QM inequality. Thus, we get that $\udist(f^\text{TTH}(\prefp), \utp) \le \frac{8nm H_t}{t^2\swf{\oc}{\utp}}$. On the other hand, if $P_3 \le P_1+P_2$, we have:
\begin{align*}
    \udist(f^\text{TTH}(\prefp), \utp) &= \frac{\swf{\oc}{\utp}}{\E_{\Y \sim f^\text{TTH}(\prefp)}[\swf{\Y}{\utp}]} \leq \frac{2P_2}{\frac{1}{2}\swf{\hat \X}{\utp}} + \frac{2P_1}{p(\oc)\swf{\oc}{\utp}}\\
    &\leq  \frac{4P_2}{P_2} + \frac{2P_1}{\frac{P_1}{2n H_t}\swf{\oc}{\utp}} \leq  4 + \frac{4n H_t}{\swf{\oc}{\utp}} \leq \frac{8n H_t}{\swf{\oc}{\utp}}.
\end{align*}
That means $$
    \udist(f^\text{TTH}(\prefp), \utp)  \leq \max\left(\frac{8nm H_t}{t^2\swf{\oc}{\utp}}, \frac{8n H_t}{\swf{\oc}{\utp}}\right).$$
Finally, combined with \Cref{eq:udist-sw-t}, and using the fact that $\min(a,b) \le \sqrt{a \cdot b}$, for $t \leq \sqrt m$ we have $$\udist(f^\text{TTH}(\prefp), \utp) \leq \sqrt{\frac{4nm H_t}{t^2\swf{\oc}{\utp}}\times \frac{2m^2 H_t \swf{\oc}{\utp}}{n}} \in O\left(\frac{m \sqrt m H_t}{t}\right),$$
and for $t \geq \sqrt m$ we have
$$\udist(f^\text{TTH}(\prefp), \utp) \leq \sqrt{\frac{8n H_t}{\swf{\oc}{\utp}}\times \frac{4m^3 H_t \swf{\oc}{\utp}}{nt^2}} \in O\left(\frac{m \sqrt m H_t}{t}\right).\qedhere$$
\end{proof}

The Top-$t$ Truncated Harmonic rule provides one possible tradeoff by achieving $O(m/t)$ metric distortion and $O(m\sqrt{m} H_t / t)$ utilitarian distortion. We already know that Random Dictatorship, which can be executed given only top-$1$ preferences (and thus, also given top-$t$ preferences for any $t \ge 1$) provides a different tradeoff by achieving $O(1)$ metric distortion~\cite{AP17} and $O(m^{1.5})$ utilitarian distortion (\Cref{prop:rd}). Using the next result, one can combine these rules (or any two rules with different tradeoffs) to find a spectrum of possible tradeoffs, which may help in eventually charting out the Pareto frontier.

\begin{theorem}
\label{thm:rule_combination}
Fix any $\beta \in [0,1]$. If we have voting rule $f^1$ with metric distortion $\md_1$ and utilitarian distortion $\ud_1$, and voting rule $f^2$ with metric distortion $\md_2$ and utilitarian distortion $\ud_2$, we can combine these two rules to achieve $\beta \md_1 + (1-\beta) \md_2 $ metric distortion and $\frac{\ud_1 \ud_2}{\beta \ud_2 + (1- \beta) \ud_1} \leq \max\{\nicefrac{\ud_1}{\beta} , \nicefrac{\ud_2}{(1-\beta)}\}$ utilitarian distortion.
\end{theorem}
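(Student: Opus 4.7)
The natural candidate is the convex combination of the two rules: the combined rule $f$ runs $f^1$ with probability $\beta$ and $f^2$ with probability $1-\beta$, outputting the resulting alternative. The plan is to bound each of the two distortions of $f$ in turn, using linearity of expectation over the two sources of randomness (the coin flip that selects the base rule, and the internal randomness of the selected base rule).

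For the metric direction, fix any preference profile $\prefp$ and any metric $d \mcns \prefp$, and let $\oc$ be an optimal alternative in $d$. Conditioning on which base rule is chosen, I would write
\[
\E_{\X \sim f(\prefp)}[\scf{\X}{d}] \;=\; \beta \, \E_{\X \sim f^1(\prefp)}[\scf{\X}{d}] + (1-\beta) \, \E_{\X \sim f^2(\prefp)}[\scf{\X}{d}] \;\le\; (\beta \md_1 + (1-\beta)\md_2) \cdot \scf{\oc}{d},
\]
applying the assumed metric distortion guarantees of $f^1$ and $f^2$ to each term. Dividing by $\scf{\oc}{d}$ yields the claimed $\beta \md_1 + (1-\beta)\md_2$ metric distortion bound.

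For the utilitarian direction, fix any $\utp \ucns \prefp$ and let $\oc$ be an alternative maximizing social welfare under $\utp$. By the same linearity, and applying $\E_{\X \sim f^k(\prefp)}[\swf{\X}{\utp}] \ge \swf{\oc}{\utp}/\ud_k$ for $k \in \{1,2\}$, I would obtain
\[
\E_{\X \sim f(\prefp)}[\swf{\X}{\utp}] \;\ge\; \left(\frac{\beta}{\ud_1} + \frac{1-\beta}{\ud_2}\right) \swf{\oc}{\utp} \;=\; \frac{\beta \ud_2 + (1-\beta)\ud_1}{\ud_1 \ud_2} \cdot \swf{\oc}{\utp}.
\]
Taking the ratio gives the main utilitarian bound $\frac{\ud_1 \ud_2}{\beta \ud_2 + (1-\beta)\ud_1}$.

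The final step is the simple inequality $\frac{\ud_1 \ud_2}{\beta \ud_2 + (1-\beta)\ud_1} \le \max\{\ud_1/\beta,\; \ud_2/(1-\beta)\}$. I would verify this by a short case analysis: without loss of generality, suppose $\ud_1/\beta \ge \ud_2/(1-\beta)$, equivalently $(1-\beta)\ud_1 \ge \beta \ud_2$. Then $\beta \ud_2 + (1-\beta)\ud_1 \ge \beta \ud_2$, so $\frac{\ud_1 \ud_2}{\beta \ud_2 + (1-\beta)\ud_1} \le \frac{\ud_1 \ud_2}{\beta \ud_2} = \ud_1/\beta$, as required. There is no substantive obstacle here — every step is essentially a one-line application of linearity of expectation or the definition of distortion; care is only needed to boundary-handle $\beta \in \{0,1\}$, where the statement reduces to the distortion of a single base rule.
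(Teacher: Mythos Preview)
Your proposal is correct and matches the paper's proof essentially line for line: the same convex-combination construction, the same linearity-of-expectation split for the metric bound, and the same reciprocal-denominator argument for the utilitarian bound. You additionally verify the trailing inequality $\frac{\ud_1 \ud_2}{\beta \ud_2 + (1-\beta)\ud_1} \le \max\{\ud_1/\beta,\ud_2/(1-\beta)\}$, which the paper simply asserts; your case analysis is fine, though you could note more directly that dropping either nonnegative summand from the denominator already gives both one-term bounds.
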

\begin{proof}
    Consider preference profile $\prefp$, utility profile $\utp \ucns \prefp$, and metric $d \mcns \prefp$. Define rule $f$ to run rule $f^1$ with probability $\beta$ and rule $f^2$ with probability $1-\beta$. For the metric distortion we have:
    \begin{align*}
        \mdist(f(\prefp), d) &= \frac{\E_{\X \sim f(\prefp)}[\scf{\X}{d}]}{\min_{\X \in \alts}\scf{\X}{d} } \\
        &=\frac{\beta \E_{\X \sim f^1(\prefp)}[\scf{\X}{d}] + (1-\beta) \E_{\X \sim f^2(\prefp)}[\scf{\X}{d}]}{\min_{\X \in \alts}\scf{\X}{d} } \\
        &= \frac{\beta \E_{\X \sim f^1(\prefp)}[\scf{\X}{d}]}{\min_{\X \in \alts}\scf{\X}{d} } +\frac{(1-\beta) \E_{\X \sim f^2(\prefp)}[\scf{\X}{d}]}{\min_{\X \in \alts}\scf{\X}{d} } \\
        &= \beta \mdist(f^1(\prefp), d) + (1-\beta)\mdist(f^2(\prefp), d) \le \beta \md_1 + (1-\beta) \md_2.
    \end{align*}

On the other hand, for the utilitarian case we have:
\begin{align*}
        \udist(f(\prefp), \utp) &= \frac{\max_{\X \in \alts}\swf{\X}{\utp} }{\E_{\X \sim f(\prefp)}[\swf{\X}{d}]}\\
        &= \frac{\max_{\X \in \alts}\swf{\X}{\utp} }{\beta \E_{\X \sim f^1(\prefp)}[\swf{\X}{d}] + (1-\beta) \E_{\X \sim f^2(\prefp)}[\swf{\X}{d}]}\\
        &=\frac{1}{\beta \frac{\E_{\X \sim f^1(\prefp)}[\swf{\X}{d}]}{\max_{\X \in \alts}\swf{\X}{\utp} } + (1-\beta) \frac{\E_{\X \sim f^2(\prefp)}[\swf{\X}{d}]}{\max_{\X \in \alts}\swf{\X}{\utp} }} \\
        &\le \frac{1}{\frac{\beta}{\ud_1} + \frac{1-\beta}{\ud_2}} = \frac{\ud_1 \ud_2}{\beta \ud_2 + (1- \beta) \ud_1}.
    \end{align*}
\end{proof}

\section{Discussion}
In this work, we investigate the feasibility of achieving asymptotically (near-)optimal distortion under both metric and utilitarian worlds simultaneously. We prove that while this is possible given full rankings, this is not the case given partial rankings in the form of top-$t$ preferences.

Our work leaves open a number of exciting open questions. For example, our deterministic rule, Pruned Plurality Veto, for achieving the best of both worlds given full rankings achieves a metric distortion of $9+\epsilon$. Even though this is a constant, it is undesirably high, and improving it is an important direction. We prove that it cannot be improved all the way to $3$, but our lower bound is only slightly greater than $3$. Achieving tight distortion bounds for randomized rules with top-$t$ preferences is another obvious open direction. For example, does there exist a randomized voting rule with top-$t$ preferences achieving constant metric distortion and $O(\frac{m\sqrt m}{t \sqrt m})$ utilitarian distortion?

More broadly, it would be interesting to derive ``best of both worlds'' style results for other settings studied in the distortion literature, such as the utilitarian distortion with unit-range instead of unit-sum assumption (where the minimum and maximum utility of each agent is set to $0$ and $1$, respectively, instead of normalizing the sum to $1$), alternative ballot designs such as approval ballots, and returning a committee or a ranking of alternatives instead of a single alternative.

Another compelling direction concerns the communication complexity of these rules. Our results indicate that it is infeasible to simultaneously achieve asymptotically optimal distortion in both worlds given top-$t$ preferences, for some values of $t$.  However, it is known that one can achieve lower utilitarian distortion while eliciting fewer bits of information than under top-$t$ preferences by using more efficient elicitation methods~\cite{mandal2019thrifty,mandal2020optimal}. This line of work studies the communication-distortion tradeoff, i.e., the optimal distortion achievable while eliciting a given number of bits of information regarding agent preferences. Combining it with our best of both worlds guarantee adds another axis of metric versus utilitarian to that tradeoff.  

\section*{Acknowledgments}
Gkatzelis was partially supported by NSF CAREER award CCF 2047907. Latifian and Shah were partially supported by an NSERC Discovery Grant.

\newpage
\bibliographystyle{ACM-Reference-Format}
\bibliography{abb,bestofboth}

\end{document}